\def\BState{\State\hskip-\ALG@thistlm}
\definecolor{darkblue}{rgb}{0,0,0.7}
\definecolor{darkred}{rgb}{0.7,0,0}
\newcommand{\R}{\mathbb{R}}
\newtheorem{theorem}{Theorem}
\newtheorem{Lemma}[theorem]{Lemma}
\newtheorem{Corollary}[theorem]{Corollary}
\newtheorem{Definition}[theorem]{Definition}
\newtheorem{example}[theorem]{Example}
\newtheorem{Remark}[theorem]{Remark}
\newtheorem*{Theorem*}{Theorem}
\newtheorem*{cor*}{Corollary}
\newtheorem*{prop*}{Proposition}
\newtheorem*{conjecture*}{Conjecture}
\begin{document}

\title[Intersection properties of finite disk collections]{Intersection properties of finite disk collections}


\author[J. F. Espinoza]{Jes\'us F. Espinoza}
\email{jesusfrancisco.espinoza@unison.mx}

\author[Cynthia G. Esquer]{Cynthia G. Esquer-P\'erez}
\email{cynthia.esquer@unison.mx}
\address{Departamento de Matem\'aticas, Universidad de Sonora, M\'exico}

\begin{abstract}
{In this work we study the intersection properties of a finite disk system in the euclidean space. We accomplish this by utilizing subsets of spheres with varying dimensions and analyze specific points within them, referred to as poles. Additionally, we introduce two applications: estimating  the common scale factor for the radii that makes the re-scaled disks intersects in a single point, this is the \v{C}ech scale, and constructing the minimal Axis-Aligned Bounding Box (AABB) that encloses the intersection of all disks in the system.}
\end{abstract}

\maketitle


\section{Introduction}
One of the new techniques developed for the analysis of large clusters of information, known as Big Data, is Topological Data Analysis (TDA). In TDA, simplicial complexes associated with the data are constructed. These structures include the Vietoris-Rips complex, the \v{C}ech complex, and the piecewise linear lower star complex, among others. Of special interest to us is the generalized \v{C}ech complex structure. Although the standard \v{C}ech complex is formed by intersecting a collection of disks with a fixed radius, the generalized version allows varying radii. This flexibility enables us to highlight specific data points by assigning or weighting them with larger and/or more rapidly expanding balls to them, while de-emphasizing others by using smaller and/or slower growing balls. This approach proves valuable for handling noisy data sets, offering an alternative to discarding data that may not meet a specific significance threshold \cite{Bell_2019}.

Understanding the patterns of intersections and the timing of intersections among a set of disks in $\mathbb{R}^d$, each with potentially different radii, is a fundamental problem. This leads to the exploration of the generalized \v{C}ech complex structure, which captures the intersection information of these disks, regardless of their radii. Rescaling the radii by the same factor, we obtain a filtered generalized \v{C}ech complex, where the associated simplicial complexes evolve as the scale parameter varies. In particular, in \cite{le2015construction}, algorithms are provided to calculate the generalized \v{C}ech complex in $\mathbb{R}^2$, and \cite{Espinoza2019} presents an algorithm to determine the \v{C}ech scale for a collection of disks in the plane.

To establish the necessary foundation for our study, Section  \ref{Section:Disk System Intersections} introduces crucial concepts and notation that will be used throughout the article and we focus on analyzing the intersection of a disk system in $\mathbb{R}^d$. We start by investigating the intersection of two disks in Subsection \ref{Subsection:two.sphere.system} and then expand our analysis to a system of $m$ disks in Subsection \ref{Subsection:more.sphere.system}. By applying Helly's Theorem, we prove that it is sufficient to examine the intersection of all subsystems consisting of $d+1$ disks in order to determine if the system has a nonempty intersection.  

 In Section \ref{Section:Rips.Cech.systems}, we define  Vietoris-Rips systems and \v{C}ech systems, together with presenting results regarding the Rips scale and the \v{C}ech scale, as well as their connections. In Subsection \ref{Subsection:Cech.system}, we present an algorithm that can determine whether the intersection of the system is empty or non-empty. This is achieved by exclusively computing the poles  of subsystems of disks (or spheres). Finally, in Subsection \ref{Section:cech.scale}, we introduce the algorithm that computes an approximation to the \v{C}ech scale using the numerical bisection method.

Additionally, in Section \ref{Section:AABB} we incorporate the concept of an minimal axis-aligned bounding box (AABB) into our methodology. An AABB is a rectangular parallelepiped whose faces are perpendicular to the basis vectors. These bounding boxes frequently arise in spatial subdivision problems, such as ray tracing \cite{Mahovsky2004} and collision detection \cite{collision}. In this paper, we study AABBs to enclose the intersection of a finite collection of disks. This approach proves valuable for discerning whether the collection intersects at a singular point or not.  In this section, we also provide an algorithm for constructing the AABB of a disk system. 


\section{Intersection properties of sphere systems} \label{Section:Disk System Intersections}
Throughout this work, we will refer to a \textbf{$d$-disk system} $M$, or simply a \textbf{disk system}, as a finite collection of closed disks in $\mathbb{R}^d$ with positive and not necessarily equal radii, i.e.,

\[M=\{D_i(c_i;r_i) \subset \mathbb{R}^d \mid c_i \in \mathbb{R}^d, r_i >0, 1\leq i \leq m < \infty\}.\]

Moreover, in order to study the intersection properties of a disk system $M$ with the approach addressed in Sections \ref{Section:AABB} and \ref{Section:Rips.Cech.systems} of this work, we will conduct a study in this section of the intersection properties of the spheres corresponding to the boundaries of each disk in $M$, which we call a \textbf{sphere system} and denote by $\partial M$,

\[\partial M = \{ \partial D_i \subset \mathbb{R}^d \mid D_i \in M \},\]

where \( \partial \) denotes the topological boundary operator.

Following the notation in \cite{MAIOLI:2017}, we introduce the following generalization of the sphere.
\begin{Definition}
An $i$-sphere in $\R^d$ is the intersection of a sphere with an affine subspace of dimension $i$.
\end{Definition}

Of course, the notions of a sphere (as a $(d-1)$-dimensional surface) and a $d$-sphere in $\mathbb{R}^d$ agree. However, an $i$-sphere in $\mathbb{R}^d$ can also be viewed as the intersection of $d$-spheres. For instance, the intersection of two spheres typically occurs in a hyperplane, forming a $(d-1)$-sphere in $\mathbb{R}^d$. When another $d$-sphere intersects this configuration, the result may be a $(d-1)$-sphere, a $(d-2)$-sphere, a $0$-sphere (a single point), or it might even be empty, all within the same hyperplane. For a disk system $M = \{D_i(c_i; r_i)\}$ composed of $m$ disks, where $\{c_1, \dots, c_m\}$ is a set in general position in $\mathbb{R}^d$, the maximum dimension of the affine subspace associated with the $i$-sphere, obtained from the intersection of all the spheres in $\partial M$, is at most $d-m+1$, or equivalently, $i=d-m+1$. This conclusion is drawn from \cite[Theorem 2.1]{MAIOLI:2017} and the fact that the affine hull of $\{c_1, \dots, c_m\}$ is of dimension $m-1$. Consequently, the following result is proven.

\begin{Lemma}
Let $M = \{D_1(c_1; r_1), \ldots, D_m(c_m; r_m)\}$ be disk system such that $\{c_1, \dots, c_m\}$ is a set in general position in $\mathbb{R}^d$. Then, the possibilities for the set $\cap_{D_i \in M} \partial D_i$ are:
\begin{enumerate}
    \item the empty set;
    \item a single point;
    \item a $(d-m+1)$-sphere.
\end{enumerate}
\end{Lemma}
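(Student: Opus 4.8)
The plan is to reduce the intersection of the $m$ spheres to the intersection of a single sphere with an affine subspace, and then to classify that intersection. First I would write each sphere $\partial D_i$ as the zero set of $f_i(x) = |x - c_i|^2 - r_i^2 = |x|^2 - 2\langle x, c_i\rangle + |c_i|^2 - r_i^2$. Fixing the index $1$, a point $x$ lies in $\bigcap_i \partial D_i$ if and only if $f_1(x) = 0$ and $f_i(x) - f_1(x) = 0$ for every $i = 2, \ldots, m$. Since the quadratic term $|x|^2$ cancels, each difference $f_i - f_1$ is an affine function of $x$ with gradient proportional to $c_i - c_1$; geometrically this is the radical hyperplane of the pair $(\partial D_1, \partial D_i)$. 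Hence $\bigcap_i \partial D_i = \partial D_1 \cap A$, where $A$ is the affine subspace cut out by the $m-1$ equations $f_i - f_1 = 0$.

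The next step is to pin down the dimension of $A$. The $m-1$ defining equations have normal vectors $c_2 - c_1, \ldots, c_m - c_1$, and the general position hypothesis on $\{c_1, \ldots, c_m\}$ is precisely the statement that these vectors are linearly independent, equivalently that the affine hull of the centers has dimension $m-1$. Therefore the linear system has full rank $m-1$, and $A$ is an affine subspace of dimension $d - (m-1) = d - m + 1$. In particular this forces $m \le d+1$, which is exactly the range in which $m$ points can be in general position in $\mathbb{R}^d$.

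It remains to classify $\partial D_1 \cap A$, the intersection of a sphere with an affine subspace of dimension $d-m+1$. Let $p$ be the orthogonal projection of the center $c_1$ onto $A$ and set $\delta = |c_1 - p|$. By the Pythagorean theorem, for $x \in A$ one has $|x - c_1|^2 = |x - p|^2 + \delta^2$, so the membership condition $x \in \partial D_1$ becomes $|x - p|^2 = r_1^2 - \delta^2$. This yields exactly the three stated alternatives: the empty set when $\delta > r_1$; the single point $x = p$ when $\delta = r_1$; and, when $\delta < r_1$, a sphere of radius $\sqrt{r_1^2 - \delta^2}$ centered at $p$ inside $A$, which by the definition of an $i$-sphere is precisely a $(d-m+1)$-sphere.

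The main obstacle is the bookkeeping in the middle step: one must verify carefully that general position yields exactly $m-1$ independent affine constraints (rather than fewer, which would inflate $\dim A$), and then confirm that the non-degenerate intersection in the last step matches the paper's definition of a $(d-m+1)$-sphere. Alternatively, the dimension count can be imported directly from \cite[Theorem 2.1]{MAIOLI:2017} together with the fact that the affine hull of the centers has dimension $m-1$, as indicated in the discussion preceding the statement.
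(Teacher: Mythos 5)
Your proposal is correct and follows essentially the same route as the paper: the paper obtains the dimension count and the trichotomy by invoking \cite[Theorem 2.1]{MAIOLI:2017} together with the fact that the affine hull of the centers has dimension $m-1$, which is exactly the radical-hyperplane reduction and Pythagorean classification you carry out explicitly. The only difference is that you spell out the argument the paper delegates to the citation, so your version is self-contained but not a different proof.
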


Remarkable points in $i$-spheres that will play a key role in the rest of the article are the \textbf{poles}.  Let $\pi_i:\R^d \longrightarrow \R$ be the canonical projection on the $i$-th factor for $i=1,...,d$, and let $\{e_1,e_2,...,e_d\}$ be the standard basis of $\R^d$.

\begin{Definition}
Let $e_q$ be the $q$-th vector of the canonical base of $\mathbb{R}^d$. An $e_q$-north (south) pole of an $i$-sphere $S$ in $\R^d$ is a point on $S$ whose projection on the $q$-th coordinate is maximum (minimum). In other words, $x\in S$ is the $e_q$-north pole if $\pi_{q}(y)\leq \pi_q(x)$ for all $y\in S-\{x\}$, where $\pi_q$ represents the projection onto the $q$-th coordinate. 

We denote the $e_q$-north pole of $S$ by $s_q^+$ and the $e_q$-south pole by $s_q^-$.
\end{Definition}

An $i$-sphere can have a single $e_q$-pole (north or south) or an infinite number of them, which occurs when a normal vector to the affine space containing the $i$-sphere is aligned with the vector $e_q$. We are interested in finding the $e_q$-poles of $(d-m+1)$-spheres originating from disk systems $M = \{D_1(c_1; r_1), \ldots, D_m(c_m; r_m)\}$, by taking the intersection $\cap_{j=1}^{m} \partial D_j$. Such $(d-m+1)$-spheres will be denoted by $S_M(c; r)$, to emphasize the disk system $M$, as well as its center and radius.

\begin{Lemma} \label{Lemma:extreme values}
    Let $M = \{D_1, \ldots, D_m\}$ be a $d$-disk system such that $\bigcap_{j = 1}^m D_j \neq \emptyset$, and let $p$ be a point in $\bigcap_{j = 1}^m D_j$ such that $\pi_q(p) \leq \pi_q(x)$ (resp. $\pi_q(p) \geq \pi_q(x)$) for every $x$ in $\bigcap_{j = 1}^m D_j$. Then, there exists an $i$-sphere $S = \partial D_{j_1} \cap \cdots \cap \partial D_{j_i}$ such that $p$ is in $S$ and $p$ is the $e_q$-south pole (resp. $e_q$-north pole) of $S$.
\end{Lemma}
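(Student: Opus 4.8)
The plan is to treat this as a convex optimization problem. The set $K=\bigcap_{j=1}^m D_j$ is a nonempty, compact, convex body (an intersection of closed balls), and $p$ is a minimizer (resp.\ maximizer) of the linear functional $x\mapsto\pi_q(x)=\langle e_q,x\rangle$ over $K$. Replacing $e_q$ by $-e_q$ makes the two cases symmetric, so I will only treat the south-pole case where $\pi_q(p)\le\pi_q(x)$ for all $x\in K$. Let $A=\{j:\|p-c_j\|=r_j\}$ be the set of constraints active at $p$; since $p$ minimizes a nonconstant linear functional it must lie on $\partial K$, so $A\ne\emptyset$. First I would invoke first-order optimality: writing the constraints as $g_j(x)=\|x-c_j\|^2-r_j^2\le 0$, whose gradients are $2(x-c_j)$, the Karush--Kuhn--Tucker conditions produce multipliers $\lambda_j\ge 0$ with $e_q+\sum_{j\in A}2\lambda_j(p-c_j)=0$, that is,
\[
e_q=\sum_{j\in A}2\lambda_j\,(c_j-p),\qquad \lambda_j\ge 0 .
\]
Geometrically this just says that the upward direction $e_q$ lies in the conical hull of the inward normals $c_j-p$ of the active spheres. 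By Carath\'eodory's theorem for cones I then discard the zero multipliers and thin $A$ down to a subset $J$ on which the vectors $\{c_j-p\}_{j\in J}$ are linearly independent and $\lambda_j>0$.

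With $J$ fixed, set $S=\bigcap_{j\in J}\partial D_j$; this is an $i$-sphere containing $p$ (an intersection of spheres, as in the preceding lemma) and is of the form required by the statement. The heart of the argument is a short algebraic identity. For any $y\in S$ and any $j\in J$, both $y$ and $p$ lie on $\partial D_j$, so $\|y-c_j\|=\|p-c_j\|=r_j$; expanding $\|y-p\|^2=\|(y-c_j)-(p-c_j)\|^2$ gives $\langle y-c_j,\,p-c_j\rangle=r_j^2-\tfrac12\|y-p\|^2$, and substituting this into $\langle c_j-p,\,y-p\rangle=r_j^2-\langle y-c_j,\,p-c_j\rangle$ yields the key equality
\[
\langle c_j-p,\;y-p\rangle=\tfrac12\,\|y-p\|^2\qquad\text{for every }j\in J .
\]
Note the right-hand side is independent of $j$. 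Pairing $e_q$ with $y-p$ and using the representation above,
\[
\pi_q(y)-\pi_q(p)=\langle e_q,\,y-p\rangle=\sum_{j\in J}2\lambda_j\,\langle c_j-p,\,y-p\rangle=\Big(\sum_{j\in J}\lambda_j\Big)\,\|y-p\|^2\;\ge\;0 ,
\]
so $\pi_q(p)\le\pi_q(y)$ for all $y\in S$, which is exactly the assertion that $p$ is the $e_q$-south pole of $S$. The maximizing case is identical with the signs reversed.

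The step I expect to be the real obstacle is the first-order optimality claim, i.e.\ the existence of the nonnegative representation of $e_q$, because the constraint qualification can genuinely fail. Slater's condition demands $\bigcap_{j}\mathrm{int}(D_j)\ne\emptyset$, and for externally tangent balls $K$ may be a single point lying on two boundaries, where the inward normals $c_j-p$ are antiparallel and $e_q$ need not lie in their conical hull at all. I would handle this degenerate regime separately: when the Mangasarian--Fromovitz qualification fails the active spheres meet non-transversally, and their intersection collapses to a lower-dimensional $i$-sphere --- in the extreme case to the single point $\{p\}$ --- of which $p$ is vacuously the $e_q$-south pole. Alternatively, one can enlarge all radii by $\varepsilon$ to restore strict feasibility, obtain a pole from the nondegenerate argument, observe that only finitely many index sets $J$ can occur and so one recurs along a subsequence $\varepsilon\to 0$, and pass to the limit using $p_\varepsilon\to p$ and the continuity of the pole map.

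Apart from this qualification issue the argument is entirely elementary, the only substantive computation being the identity $\langle c_j-p,\,y-p\rangle=\tfrac12\|y-p\|^2$, which is what converts the sign of the Lagrange multipliers into the statement that $p$ is a pole rather than merely a critical point of $\pi_q$ on $S$.
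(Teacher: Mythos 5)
Your core argument is correct, and it is genuinely different from the paper's. The paper takes $\Lambda(p)$ to be the \emph{full} set of indices $j$ with $p\in\partial D_j$, sets $S=\bigcap_{j\in\Lambda(p)}\partial D_j$, notes that near $p$ this $S$ lies inside $\bigcap_j D_j$ (the inactive disks contain a neighborhood of $p$ in their interiors), so $p$ is a local minimizer of $\pi_q$ on $S$, and then uses that a local extremum of a linear functional on a sphere is global. You instead extract a subset $J$ of the active set from a KKT certificate $e_q=\sum_{j\in J}2\lambda_j(c_j-p)$, $\lambda_j\ge 0$, and convert it into \emph{global} polarity on $S$ via the identity $\langle c_j-p,\,y-p\rangle=\tfrac12\|y-p\|^2$ for $y,p\in\partial D_j$; that identity and the resulting computation $\pi_q(y)-\pi_q(p)=\bigl(\sum_j\lambda_j\bigr)\|y-p\|^2\ge 0$ are correct and replace the paper's local-to-global step by an explicit quantitative inequality. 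What your route buys is a self-contained certificate; what it costs is the constraint-qualification issue, which the paper's choice of the full active set avoids entirely.

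That issue is the one genuine gap in your write-up: both of your proposed patches are left as sketches (the claim that when the qualification fails the intersection ``collapses \dots of which $p$ is vacuously the pole'' is asserted rather than proved, and the $\varepsilon$-perturbation argument leaves the convergence of $p_\varepsilon$ and of the associated poles unverified). The gap closes in two lines, though. By Gordan's theorem, MFCQ fails at $p$ exactly when $0\in\operatorname{conv}\{p-c_j\colon j\in A\}$, say $\sum_{j\in A}\mu_j(p-c_j)=0$ with $\mu_j\ge 0$ and $\sum_j\mu_j=1$. Then $f(x):=\sum_{j\in A}\mu_j\bigl(\|x-c_j\|^2-r_j^2\bigr)$ equals $\|x\|^2$ plus an affine function, hence is strictly convex, and satisfies $f(p)=0$ and $\nabla f(p)=0$, so $f>0$ away from $p$; since $f\le 0$ on $\bigcap_{j\in A}D_j$, that intersection is $\{p\}$, and a fortiori $S=\bigcap_{j\in A}\partial D_j=\{p\}$, of which $p$ is trivially the $e_q$-pole. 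With this dichotomy inserted in place of your sketched degenerate case, your proof is complete.
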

\begin{proof}
    Since $\cap_{j=1}^m D_j \neq \emptyset$, then $\partial (\cap_{j=1}^m D_j) \neq \emptyset$, $\partial (\cap_{j=1}^m D_j) \subset \cap_{j=1}^m D_j$ due to the closedness of the sets $D_j$, for $j = 1,\ldots, m$, and $p \in \partial (\cap_{j=1}^m D_j)$. 

    On the other hand, since $\partial (\cap_{j=1}^m D_j) \subset \cup_{i=1}^m \partial D_j$, there exist indices $j_1, \ldots, j_i$ such that $p \in \partial D_{j_r}$ for any $r = 1, \ldots, i$; let $\Lambda(p) = \{j_1, \ldots, j_i\} \subseteq \{1, \ldots, m\}$ be a maximal subset of indices such that $p \in D_j$ if and only if $j \in \Lambda(p)$. We claim that $p \in \cap_{r = 1}^i \partial D_{j_r}$ is the $e_q$-south pole of $S := \cap_{r = 1}^i \partial D_{j_r}$. 
    
    In effect, let $V_p \subset \mathbb{R}^d$ be an open neighborhood of $p$ sufficiently small such that:
    \begin{enumerate}
        \item Every $x \in V_p \cap \partial (\cap_{j = 1}^m D_j)$ 
        has as maximal set of indices a proper subset of $\Lambda(p)$, 
        \item $S \cap V_p \subset \partial (\cap_{j = 1}^m D_j)$.
    \end{enumerate}
    The first condition can be guaranteed by the finiteness of the disk system $M$, and the second condition is a consequence of the maximality of the set $\Lambda(p)$. Therefore, $\pi_q(p) \leq \pi_q(x)$ for every $x \in S \cap V_p$, which is equivalent to the fact that $\pi_q(p) \leq \pi_q(x)$ for every $x \in S$, in the case of $i$-spheres.

\end{proof}


\subsection{Sphere systems with two spheres}
\label{Subsection:two.sphere.system}
In the following two lemmas we provide the computations to determine the center, radius and poles for a $(d-1)$-sphere given by the intersection of two disks in $\mathbb{R}^d$.

\begin{Lemma} 
\label{two-disk border poles}
Let $M = \{D_1(c_1; r_1), D_2(c_2; r_2)\}$ be a disk system with two $d$-disks such that $\partial D_1 \cap \partial D_2$ is a $(d-1)$-sphere $S=S_M(c; r)$ with center $c$ and radius $r$. Then,
\[c = \frac{1}{2}\left(1+ \frac{r_2^2-r_1^2}{\|c_2 - c_1\|^2} \right)c_1 + \frac{1}{2} \left( 1 + \frac{r_1^2 - r_2^2}{\|c_2 - c_1 \|^2} \right)c_2\] 
and
\[r = \frac{2\sqrt{s(s-\|c_2 - c_1\|)(s-r_1)(s-r_2)}}{\|c_2 - c_1\|}\]
where $s=\frac{1}{2}(\|c_2 - c_1\| + r_1 + r_2)$.
\end{Lemma}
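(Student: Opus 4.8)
The plan is to intersect the two defining quadratic equations, exploit the rotational symmetry of the configuration about the line through the two centers, and finally identify the resulting expression for $r$ as Heron's formula. First I would write the defining equations $\|x - c_1\|^2 = r_1^2$ and $\|x - c_2\|^2 = r_2^2$ for a point $x \in \partial D_1 \cap \partial D_2$ and subtract them. The quadratic terms $\|x\|^2$ cancel, leaving the single linear equation $2\langle x, c_2 - c_1\rangle = r_1^2 - r_2^2 + \|c_2\|^2 - \|c_1\|^2$. This is the radical hyperplane of the two spheres; it has normal direction $c_2 - c_1$ and contains the $(d-1)$-sphere $S$. Note that the hypothesis that $S$ is a genuine $(d-1)$-sphere forces $c_1 \neq c_2$, so $\|c_2 - c_1\| \neq 0$ and all the divisions below are legitimate.

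Next I would locate the center $c$. Since $S$ lies in a hyperplane orthogonal to $c_2 - c_1$ and the whole configuration is invariant under rotations about the line through $c_1$ and $c_2$, the center $c$ must lie on that line; hence I can write $c = (1-t)c_1 + t c_2$ for a scalar $t$. Imposing that $c$ satisfies the radical hyperplane equation gives a single linear equation for $t$, which after using $\|c_2\|^2 - \|c_1\|^2 = \langle c_2 - c_1,\, c_2 + c_1\rangle$ collapses to $t = \tfrac12\bigl(1 + (r_1^2 - r_2^2)/\|c_2 - c_1\|^2\bigr)$. Substituting this value of $t$ back into $c = (1-t)c_1 + t c_2$, and noting that $1 - t = \tfrac12\bigl(1 + (r_2^2 - r_1^2)/\|c_2 - c_1\|^2\bigr)$, yields exactly the stated barycentric formula for $c$.

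For the radius I would invoke the Pythagorean theorem. Because $c - c_1$ is parallel to $c_2 - c_1$ while $x - c$ lies in the hyperplane, hence is orthogonal to $c_2 - c_1$, for every $x \in S$ the decomposition $x - c_1 = (x - c) + (c - c_1)$ is orthogonal, so $r_1^2 = \|x - c_1\|^2 = \|x - c\|^2 + \|c - c_1\|^2 = r^2 + \|c - c_1\|^2$. Thus $r^2 = r_1^2 - \|c - c_1\|^2$, and since $\|c - c_1\| = t\,\|c_2 - c_1\| = (\|c_2 - c_1\|^2 + r_1^2 - r_2^2)/(2\|c_2 - c_1\|)$ from the previous step, substituting expresses $r^2$ as a single rational function of $\|c_2 - c_1\|$, $r_1$, $r_2$.

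The main work, and the one step I expect to require genuine care, is the algebraic simplification of the numerator $4\|c_2 - c_1\|^2 r_1^2 - (\|c_2 - c_1\|^2 + r_1^2 - r_2^2)^2$ of $r^2$. I would factor it first as a difference of two squares and then factor each resulting quadratic again, obtaining the product of the four factors $\|c_2 - c_1\| + r_1 + r_2$, $\ -\|c_2 - c_1\| + r_1 + r_2$, $\ \|c_2 - c_1\| - r_1 + r_2$ and $\|c_2 - c_1\| + r_1 - r_2$. With $s = \tfrac12(\|c_2 - c_1\| + r_1 + r_2)$ these equal $2s$, $2(s - \|c_2 - c_1\|)$, $2(s - r_1)$ and $2(s - r_2)$ respectively, so the numerator becomes $16\,s(s-\|c_2-c_1\|)(s-r_1)(s-r_2)$; dividing by the denominator $4\|c_2-c_1\|^2$ and taking the square root produces the Heron-type expression claimed in the lemma. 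A useful sanity check is the geometric reading of the result: $r$ is twice the area of the triangle with side lengths $\|c_2 - c_1\|$, $r_1$, $r_2$ divided by the base $\|c_2 - c_1\|$, i.e. the altitude from the third vertex to that base, which is precisely the distance from a point of $S$ to the line of centers.
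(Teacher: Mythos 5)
Your proposal is correct and follows essentially the same route as the paper: both locate the center by intersecting the radical hyperplane (normal $c_2 - c_1$) with the line of centers, obtaining the same parameter $t$, and both identify $r$ as the altitude of the triangle with side lengths $\|c_2 - c_1\|$, $r_1$, $r_2$. The only difference is that the paper simply cites Heron's formula for that altitude, whereas you derive it from the Pythagorean relation $r^2 = r_1^2 - \|c - c_1\|^2$ and factor the numerator by hand — a more self-contained but equivalent argument.
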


\begin{proof}
Let $\Pi$ be the hyperplane containing the $(d-1)$-sphere $S$, which is defined by the equation:
\begin{equation}
\label{hiperplano}
\sum_{i=1}^d (k_i-h_i)x_i-\frac{1}{2}\sum_{i=1}^d (k_i^2-h_i^2)=\frac{r_1^2-r_2^2}{2},
\end{equation}
where $c_1 = (h_1, \ldots, h_d)$ and $c_2 = (k_1, \ldots, k_d)$. Then the normal vector of the hyperplane $\Pi$ is given by $N := c_2 - c_1 = (k_1 - h_1, \ldots, k_d - h_d)$, and the center $c$ of $S$ is determined by the intersection point of the hyperplane $\Pi$ with the perpendicular line that passes through the center $c_1$ of $D_1$. This line can be parameterized as $\gamma : t \mapsto c_1 + t N = (x_1(t), \ldots , x_d(t))$, such that $\gamma(0) = c_1$ and $\gamma(1) = c_2$. We can compute the intersection point $c = \gamma(t_*)$ of $\Pi$ and $\gamma([0, 1])$, for any $t_* \in [0,1]$, by substituting it in (\ref{hiperplano}),
\[ \sum_{i=1}^d (k_i-h_i)(h_i+t_*(k_i-h_i)) = \frac{1}{2}\left(\sum_{i=1}^d (k_i^2-h_i^2) + (r_1^2-r_2^2)\right). \]
And solving for $t_*$, we obtain that $t_* = \frac{1}{2} + \frac{r_1^2-r_2^2}{2\|c_2 - c_1\|^2}$. Hence, the center of $S$ is given by:
\begin{align*}
c = c_1 + t_*N = \frac{1}{2}\left(1+ \frac{r_2^2-r_1^2}{\|c_2 - c_1\|^2} \right)c_1+ \frac{1}{2}\left(1+ \frac{r_1^2-r_2^2}{\|c_2 - c_1\|^2} \right)c_2
\end{align*}
Next, we will compute the radius $r$ of $S$. This radius can be determined as the height $r$ of the triangle with base $\|c_2 - c_1\|$ formed by the points $c_1$, $c_2$, and a point on $S$. Thus, by the Heron's formula we have that
\begin{equation*} 
r = \frac{2\sqrt{s(s-\|c_2 - c_1\|)(s-r_1)(s-r_2)}}{\|c_2 - c_1\|},
\end{equation*}
where $s=\frac{1}{2}(\|c_2 - c_1\|+r_1+r_2)$ correspond to the semi-perimeter.
\end{proof}

We can proceed now to compute the poles of the $(d-1)$-sphere $\partial D_1 \cap \partial D_2$.
\begin{Lemma} 
\label{Lemma:poles of two spheres}
Let $D_1(c_1; r_1)$ and $D_2(c_2; r_2)$ be two $d$-disks such that $\partial D_1 \cap \partial D_2$ is a $(d-1)$-sphere $S = S(c; r)$ with center $c$ and radius $r$. Then, the $e_q$-poles of $S$ are $s_q^\pm = c \pm \sum_i^d x_i e_i$, where
\[x_i = 
\begin{cases}
\dfrac{r |\pi_i(c_2 - c_1) \pi_q(c_2 - c_1)|}{\|c_2 - c_1\|\sqrt{\|c_2 - c_1\|^2 - \pi_q(c_2 - c_1)^2}}, & i \neq q \\ \\
\dfrac{r{\sqrt{\|c_2 - c_1\|^2 - \pi_q(c_2 - c_1)^2}}}{\|c_2 - c_1\|}, & i = q.
\end{cases}
\]
\end{Lemma}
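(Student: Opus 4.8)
The plan is to realize $S$ as a round sphere of radius $r$ sitting inside the hyperplane $\Pi$ from the proof of Lemma~\ref{two-disk border poles}, and then to locate an $e_q$-pole by extremizing the linear functional $x \mapsto \pi_q(x)$ over $S$. Write $N := c_2 - c_1$ for the normal of $\Pi$ and recall that $c \in \Pi$. Since every point of $S$ lies in $\Pi$ and is at distance $r$ from $c$, I can parameterize $S = \{\, c + v : \langle v, N\rangle = 0,\ \|v\| = r \,\}$. Setting $v = x - c$, finding the $e_q$-north pole (resp.\ south pole) amounts to maximizing (resp.\ minimizing) $\langle v, e_q\rangle$ subject to $\langle v, N\rangle = 0$ and $\|v\| = r$, i.e.\ optimizing a linear function over the sphere $\{\, v : \|v\| = r \,\}$ inside the hyperplane $N^\perp$ through the origin.

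First I would solve this constrained problem. Using Lagrange multipliers for the two constraints $\langle v, N\rangle = 0$ and $\|v\|^2 = r^2$, a critical point satisfies $e_q = \lambda N + 2\mu v$; pairing with $N$ and using $\langle v, N\rangle = 0$ forces $\lambda = \pi_q(N)/\|N\|^2$, so $v$ is a scalar multiple of the orthogonal projection
\[
P e_q \;=\; e_q - \frac{\langle e_q, N\rangle}{\|N\|^2}\, N \;=\; e_q - \frac{\pi_q(N)}{\|N\|^2}\, N
\]
of $e_q$ onto $N^\perp$. (Equivalently, one may invoke Cauchy--Schwarz: on the sphere $\|v\| = r$ in $N^\perp$ one has $\langle v, e_q\rangle = \langle v, P e_q\rangle$, which is extremized exactly when $v$ points along $\pm P e_q$.) Imposing $\|v\| = r$ then gives the two extremizers $v = \pm\, r\, P e_q / \|P e_q\|$, and hence the poles $s_q^\pm = c \pm r\, P e_q / \|P e_q\|$, with the sign $+$ giving the north pole because $\pi_q(P e_q) = 1 - \pi_q(N)^2/\|N\|^2 \ge 0$.

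The only real computation is simplifying $\|P e_q\|$. Writing $N_i = \pi_i(c_2 - c_1)$ and using the identity $\sum_{i \neq q} N_i^2 = \|N\|^2 - N_q^2$, the components of $P e_q$ are $-N_q N_i/\|N\|^2$ for $i \neq q$ and $(\|N\|^2 - N_q^2)/\|N\|^2$ for $i = q$, and a short calculation collapses the sum of their squares to
\[
\|P e_q\|^2 \;=\; \frac{\|N\|^2 - N_q^2}{\|N\|^2}.
\]
Substituting $\|P e_q\| = \sqrt{\|N\|^2 - N_q^2}\,/\,\|N\|$ into $s_q^\pm = c \pm r\, P e_q/\|P e_q\|$ and reading off the $i$-th coordinate of the displacement $r\, P e_q/\|P e_q\|$ yields exactly the stated $x_i$, both for $i \neq q$ and for $i = q$ (for $i \neq q$ the displacement component is $-r N_q N_i / (\|N\|\sqrt{\|N\|^2 - N_q^2})$, whose magnitude is the quantity written with $|\cdot|$ in the statement).

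I expect the main obstacle to be less the algebra than the degenerate case: the formula requires $\|N\|^2 - \pi_q(N)^2 \neq 0$. When $N$ is parallel to $e_q$ this quantity vanishes, $\Pi$ is orthogonal to $e_q$, and every point of $S$ shares the same $q$-coordinate, so $S$ has infinitely many $e_q$-poles and the formula does not apply; this is precisely the situation flagged in the discussion preceding the definition of poles. I would therefore carry out the argument under the standing assumption that $N$ is not aligned with $e_q$, which guarantees a nonzero denominator and a single north pole and single south pole.
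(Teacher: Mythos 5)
Your argument is correct, and while it pursues the same underlying strategy as the paper (constrained optimization of the coordinate functional $\pi_q$ over the $(d-1)$-sphere $S$ via Lagrange multipliers), the execution is genuinely different and cleaner. The paper first uses the hyperplane equation to eliminate $x_q$, reducing to a one-constraint Lagrange problem in $d-1$ variables, and then extracts the answer by comparing the multiplier across pairs of indices $i \neq \tilde{i}$ — a somewhat laborious algebraic route that also forces a separate treatment of the case $\pi_q(c_2-c_1)=0$. You instead keep both constraints and recognize the extremizer as the normalized orthogonal projection $P e_q = e_q - \frac{\pi_q(N)}{\|N\|^2}N$ of $e_q$ onto $N^\perp$, yielding the closed form $s_q^\pm = c \pm r\,P e_q/\|P e_q\|$ in one step; the identity $\|P e_q\|^2 = (\|N\|^2-\pi_q(N)^2)/\|N\|^2$ then gives the stated coordinates directly. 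Your formulation buys three things: it subsumes the paper's special case $\pi_q(N)=0$ without a case split (there $P e_q = e_q$ and the poles are $c \pm r e_q$); it isolates the only true degeneracy, $N \parallel e_q$, where $P e_q = 0$ and infinitely many poles exist; and it retains the signs of the off-$q$ displacement components $-r\,\pi_q(N)\pi_i(N)/(\|N\|\sqrt{\|N\|^2-\pi_q(N)^2})$, which the lemma's statement discards by writing $|\cdot|$ — so your vector formula is in fact the more faithful description of the pole locations.
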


\begin{proof}
For simplicity, we translate the hyperplane $\Pi$, which contains the $(d-1)$-sphere $S$, as well as the sphere itself, to the origin; in such case, the corresponding equations are given by,
\begin{align*}
\sum_{i=1}^d (k_i-h_i)x_i & = 0, \\
\sum_{i=1}^d x_i^2 & = r^2,
\end{align*}
where $h_i := \pi_i(c_1)$ and $k_i := \pi_i(c_2)$ for $i =1, \ldots, d$.
In the case that $k_q - h_q = \pi_q(c_2 - c_1) = 0$, the normal vector $N = c_2 - c_1$ of the hyperplane $\Pi$ is orthogonal to the basis vector $e_q$. Therefore, the $e_q$-poles of $S$ are $c \pm re_q$, which agree with the formulae of the lemma.

On the other hand, suppose that $k_q - h_q\neq 0$. To find the $e_q$-poles of $S$, we will use the Lagrange multiplier method. Consider the following function:
\begin{equation} \label{x_q}
x_q=f(x_1,x_2,...,\widehat{x_q},...,x_{d})=\frac{-\sum_{j \neq q}^{d} (k_j - h_j)x_j}{k_q-h_q},
\end{equation}
subject to the restriction:
\label{restriction}
\[ g(x_1,x_2,...,\widehat{x_q},...,x_{d})=\sum_{j \neq q}^{d} x_j^2 + \left(\frac{-\sum_{j \neq q}^{d} (k_j - h_j)x_j}{k_q-h_q}\right)^2-r^2=0 \]
Let $\lambda$ be the Lagrange multiplier, we define 
\[ h(x_1,...,\widehat{x_q},...,x_{d},\lambda)=f(x_1,...,\widehat{x_q},...,x_{d})+ \lambda g(x_1,...,\widehat{x_q},...,x_{d}) \]
For any $i \neq q$, consider the following system of equations:
\[ \frac{\partial h}{\partial x_i}=-\frac{k_i - h_i}{k_q - h_q}+2\lambda x_i + 2\lambda \left(\frac{-\sum_{j\neq q}^d (k_j - h_j) x_j}{k_q - h_q}\right)\left(-\frac{k_i - h_i}{k_q - h_q}\right) = 0. \]
Then
\postdisplaypenalty=0
\begin{align*}
-\frac{k_i - h_i}{k_q - h_q} + 2\lambda\left( x_i+ \frac{k_i - h_i}{(k_q - h_q)^2}{\sum_{j\neq q}^d (k_j - h_j) x_j}\right) &= 0 \\
-(k_i - h_i)(k_q - h_q) + 2\lambda\left( (k_q - h_q)^2 x_i+ (k_i - h_i){\sum_{j\neq q}^d (k_j - h_j) x_j}\right) &= 0
\end{align*}
Solving this system of equations for $\lambda$, we obtain that,

\[ \lambda = \frac{(k_i - h_i)(k_q - h_q)}{2\left( (k_q - h_q)^2 x_i + (k_i - h_i){\sum_{j\neq q}^d (k_j - h_j) x_j}\right)} \]

Comparing the last expression for two indices $i \neq \tilde{i}$, we have that,

\postdisplaypenalty=0
\begin{align*}
x_i^2 &= \frac{r^2(k_i - h_i)^2 }{ (k_{\tilde{i}} - h_{\tilde{i}})^2 + \sum_{j \neq \tilde{i},q}{(k_j - h_j)}^2+ \frac{1}{(k_q - h_q)^2}\left(\sum_{j\neq q} (k_j - h_j)^2\right)^2 } \\
 &= \frac{r^2(k_i - h_i)^2 }{ \sum_{j \neq q}{(k_j - h_j)}^2+ \frac{1}{(k_q - h_q)^2}\left(\sum_{j\neq q} (k_j - h_j)^2\right)^2 } \\
&= \frac{r^2(k_i - h_i)^2 (k_q - h_q)^2 }{ \left( (k_q - h_q)^2 + \sum_{j\neq q} (k_j - h_j)^2 \right) \left(\sum_{j \neq q}{(k_j - h_j)}^2\right) } \\
&= \frac{r^2 \pi_i(c_2 - c_1)^2 \pi_q (c_2 - c_1)^2}{\|c_2 - c_1\|^2 \left(\|c_2 - c_1\|^2 - \pi_q(c_2 - c_1)^2 \right)}
\end{align*}

Finally, for $i = q$, we can use the last expression to substitute it in (\ref{x_q}) and obtain the desired result.
\end{proof}


\subsection{Sphere systems with more than two spheres}
\label{Subsection:more.sphere.system}
Now, let us proceed with the explicit calculation of the coefficients for the center $c$ of the $(d-m+1)$-sphere $S=\cap_{j=1}^{m} \partial D_j$. We can achieve this by considering the disk system translated to $c_m$, denoted as $\{D_j(c_j-c_m;r_j)\}_{j=1}^{m}$, and by defining the $(d-m+1)$-sphere $S-\{c_m\}=\cap_{j=1}^{m} \partial D_j(c_j-c_m;r_j)$. This sphere is positioned at the intersection of hyperplanes (for more details, refer to \cite{MAIOLI:2017}). 

\begin{equation}
    \label{hiperplanos}
    (c_k-c_m)^Tx=\frac{1}{2}(r_m^2+\|c_k-c_m\|^2-r_k^2)
\end{equation} 
for all $k=1,...,m-1$. 
Utilizing the information that the center of $S-\{c_m\}$ can be expressed as a combination of the centers $c_k-c_m$ and substituting it into (\ref{hiperplanos}), we obtain a linear system of equations with dimensions $(m-1)\times (m-1)$:
\begin{equation*}
    \sum_{j=1}^{m-1} \lambda_j (c_k-c_m)\cdot (c_j-c_m)=\frac{1}{2}(r_m^2+\|c_k-c_m\|^2-r_k^2) 
\end{equation*} 
for $k=1,...,m-1$. Solving the system of equations for $\lambda=(\lambda_1,...,\lambda_{m-1})$, we find the center of $S$ as follows: 
$$c=\lambda_1(c_1-c_m)+...+\lambda_{m-1}(c_{m-1}-c_m) +c_m $$
The radius of the sphere $S$ can be computed using the equation: 
$$r^2=r_k^2-\|c-c_k\|^2$$
for any $k\in \{1,...,m-1\}$.

Now that we have determined the center and radius of $S$, as well as the affine space that contains it, we can proceed to compute its $e_q$-poles for each $q\in \{1,2,...,d\}$. These poles reside in the affine space that contains $S$ and within a set that we define below.

Let $S$ be an $i$-sphere in $\R^d$, and let $n_1,...,n_{d-i}$ be orthogonal vectors to the affine space $L$ that contains $S$. Consider the space $M$ generated by these vectors together with the vector $e_q$ from the canonical basis of $\R^d$. Let us denote $n_j=\left(n_l^{(j)}\right)_{l=1}^d$ for each $j=1,...,d-i$. Then, we can define $L_0$, the set $L$ translated to the origin, as follows:
\begin{align*}
L_0&=\left<\{n_j\}_{j=1}^{d-i} \right>^{\perp}\\
&=\left\{ x\in \R^d \mid x \cdot n_j=0, \hspace{.2cm} \forall j=1,...,d-i \right\}
\end{align*}
The set $M$ is defined as:
\begin{align*}
M&=\left<\{n_j\}_{j=1}^{d-i}\cup \{e_q\}  \right>\\
&=\left\{ x=\sum_{j=1}^{d-i}\lambda_j n_j+\lambda_{d-i+1}e_q\in \R^d \mid\lambda_j\in \R, \hspace{.2cm} \forall j=1,...,d-i+1 \right\}\\
&= \left\{ \left(\sum_{j=1}^{d-i} \lambda_j n_1^{(j)},..., \sum_{j=1}^{d-i} \lambda_j n_q^{(j)}+\lambda_{d-i+1},...,\sum_{j=1}^{d-i} \lambda_j n_d^{(j)} \right) \mid\lambda_j\in \R \right\} 
\end{align*} 
Refer to Figure \ref{intergeo4} for a visual representation of the subspaces $L$ and $M+\{c\}$.
\begin{figure}[htb]
    \centering 
    \includegraphics[width=.4\textwidth]{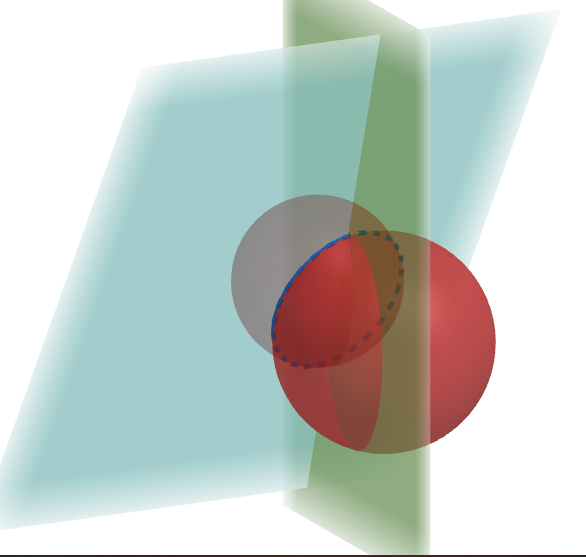}
    \caption{Visualization of the subspaces $L$ and $M+\{c\}$.} \label{intergeo4}
\end{figure}

As mentioned above, the $e_q$-poles of $S$ lie at the intersection of $L$ and $M+\{c\}$, where $c$ is the center of $S$. To simplify the calculations, we will utilize $L_0$ and $M$, and then translate them into $c$. The intersection of $M$ and $L_0$ can be expressed as follows:
{\scriptsize
\begin{align*}
    M\cap L_0&=\left\{x\in M\mid  x\cdot n_k=0,  \hspace{.2cm}\forall k=1,...,d-i\right\} \\
    &=\left\{\sum_{j=1}^{d-i}\lambda_j n_j+\lambda_{d-i+1}e_q\in \R^d \left| \left(\sum_{j=1}^{d-i}\lambda_j n_j+\lambda_{d-i+1}e_q\right) \cdot n_k=0, \hspace{.2cm} \lambda_j\in \R  \hspace{.2cm}\forall k=1,...,d-i \right. \right\} \\
    &=\left\{\sum_{j=1}^{d-i}\lambda_j n_j+\lambda_{d-i+1}e_q\in \R^d \left| \sum_{j=1}^{d-i}\lambda_j n_j\cdot n_k+\lambda_{d-i+1}n_q^k =0, \hspace{.2cm} \lambda_j\in \R  \hspace{.2cm}\forall k=1,...,d-i \right. \right\} 
\end{align*} 
}

Let us consider a disk system in $\mathbb{R}^d$, denoted $\{D_j(c_j;r_j)\}_{j=1}^m$, where $m<d$. The intersection of their boundaries forms a $(d-m+1)$-sphere $S$. In this case, the subspace $M$ has dimension $m$, or $dim(M) = m-1$ if $e_q \in M$. We choose the normal vectors for the affine space containing $S$ as $n_j=c_j-c_m$, where $j=1,...,m-1$. Then
{\scriptsize
\begin{equation*}
    M= \left\{ \left(\sum_{j=1}^{m-1} \lambda_j \left(c_1^{(j)}-c_1^{(m)}\right),..., \sum_{j=1}^{m-1} \lambda_j \left(c_q^{(j)}-c_q^{(m)}\right)+\lambda_{m},...,\sum_{j=1}^{m-1} \lambda_j \left(c_d^{(j)}-c_d^{(m)}\right) \right) \left| \lambda_j\in \R \right. \right\}
\end{equation*} 
}
By rewriting, we have
{\scriptsize
\begin{align*}
    M\cap L_0
    &=\left\{\sum_{j=1}^{m-1}\lambda_j n_j+\lambda_{m}e_q\in \R^d \left|  \sum_{j=1}^{m-1}\lambda_j n_j\cdot n_k+\lambda_{m}n_q^k =0, \hspace{.2cm} \lambda_j\in \R,  \hspace{.2cm}\forall k=1,...,m-1\right.\right\} 
\end{align*}
}

If $S(c;r)=\cap_{j=1}^m \partial D_j$ is the $(d-m+1)$-sphere with center in $c$ and radius $r$, then the $e_q$-poles of $S$ are the $e_q$-poles of $S-\{c_m\}$ but translated by $c$. The poles of $S-\{c_m\}$ are located in $M\cap L_0$. If $p$ is an $e_q$-pole of $S-\{c_m\}$, then it can be expressed as 
$$p=\sum_{j=1}^{m-1}\lambda_j n_j + e_q\lambda_m$$ 
 for some $ \lambda_j \in \R$, $j=1,...,m$ and the following conditions holds:  
$$\sum_{j=1}^{m-1} \lambda_j n_j\cdot n_k + \lambda _m n_q^k=0$$ 
for each $k=1,...,m-1$ and 
$$\|p\|^2=r^2$$

Thus, if $p=\sum_{j=1}^{m-1}\lambda_j n_j + e_q\lambda_m$ is an $e_q$-pole of $S-\{c_m\}$, the following equations are satisfied for $\lambda_1,\lambda_2,...,\lambda_m \in \R$:
\begin{equation}
    \label{lineal}
    \sum_{j=1}^{m-1} \lambda_j n_j\cdot n_k + \lambda _m n_q^k=0
\end{equation}

\begin{equation}
    \label{cuadratica}
    \sum_{i=1}^d \left(\sum_{j=1}^{m-1} \lambda_j n_i^j \right)^2 + 2\lambda_m \sum_{j=1}^{m-1}\lambda_j n_q^j + \lambda_m^2=r^2
\end{equation} 
for all $k=1,...,m-1$, with $r$ the radius of the $(d-m+1)$-sphere $S$.
From (\ref{lineal}) we have the system
\begin{equation*}
\begin{pmatrix}
    n_1\cdot n_1 & n_1\cdot n_2 & .... &n_1\cdot n_{m-1} \\
    n_2\cdot n_1 & n_2\cdot n_2 &.... &n_2\cdot n_{m-1} \\
    ...\\
    n_{m-1}\cdot n_1 & n_{m-1}\cdot n_2 &.... &n_{m-1}\cdot n_{m-1} \\
\end{pmatrix} \begin{pmatrix}
    \lambda_1 \\
    \lambda_2\\
    ...\\
    \lambda_{m-1}
\end{pmatrix} + \lambda_m
\begin{pmatrix}
    n_q^1 \\
    n_q^2\\
    ...\\
    n_q^{m-1}
\end{pmatrix} =0.
\end{equation*}

Let us denote $A$ as the matrix $(n_i\cdot n_j)_{i,j}$ and $B$ as the vector $(-n_q^j)_{j=1}^{m-1}$. Then, we have $A\lambda=\lambda_m B$, where $\lambda=(\lambda_1,...,\lambda_{m-1})$. Solving for $\lambda$, we obtain $\lambda_{j}=\lambda_m (A^{-1}B)[j]$ for each $j=1,...,m-1$ (where $(A^{-1}B)[j]$ denotes the entry $j$ of the $(m-1)\times 1$ vector $(A^{-1}B)$). By substituting the value of $\lambda_j$ into (\ref{cuadratica}), we obtain the quadratic equation:
$$\lambda_m^2\left[\sum_{i=1}^d\left( \sum_{j=1}^{m-1} (A^{-1}B)[j]n_{i}^j\right)^2 + 2\sum_{j=1}^{m-1} (A^{-1}B)[j]n_{q}^j +1\right]-r^2=0$$
Let us define 
$$\Gamma_i=\sum_{j=1}^{m-1}(A^{-1}B)[j]n_{i}^j$$
for all $i=1,. ..,d$. Solving this equation, we find: 

$$\lambda_m=\frac{\pm r}{\sqrt{\sum_{i=1}^d \Gamma_i^2 +2 \Gamma_q+1}}$$  $$\lambda_j=\frac{\pm r (A^{-1}B)[j]}{\sqrt{\sum_{i=1}^d \Gamma_i^2 +2 \Gamma_q+1}} $$
for $j=1,...,m-1$. Therefore, the $e_q$-poles of $S$, for $q=1,...,d$, are:
\begin{equation*}
    p=\sum_{j=1}^{m-1}\lambda_j n_j + e_q\lambda_m +c =\frac{\pm r }{\sqrt{\sum_{i=1}^d \Gamma_i^2 +2 \Gamma_q+1}}\left(\sum_{j=1}^{m-1}(A^{-1}B)[j] n_j +e_q \right)+c
\end{equation*}

\section{Vietoris-Rips and \texorpdfstring{\v{C}}{C}ech systems} \label{Section:Rips.Cech.systems}

Our goal in this section is to provide a comprehensive understanding of the disk system, the Vietoris-Rips system, and the \v{C}ech system. Additionally, we introduce some results that establish a certain connection between both disk systems. Investigating the features and qualities of data and spaces can provide us with useful knowledge about their geometric and topological characteristics.

Before we look into the definitions of Vietoris-Rips and \v{C}ech systems, let us give a brief overview. These systems are essential in the field of topological data analysis for recognizing and comprehending the geometric structure of point cloud data. Vietoris-Rips complex and \v{C}ech complex share the goal of capturing the topology of the underlying metric space, both provide different ways of recognizing connections and associations among data points. The Vietoris-Rips complex tends to be more efficient and scalable for large datasets, while the \v{C}ech complex can be more accurate but computationally more expensive. The choice between the two depends on the nature of the dataset and the specific goals of the topological analysis. Now, let us move on to defining these fundamental concepts.

\begin{Definition} \label{Definition:RipsCechSystems}
Let $M=\{D_1,D_2,\ldots , D_m \}$ be a $d$-disk system. We say $M$ is a \textit{Vietoris-Rips system} if $D_i\cap D_j \neq \emptyset$ for each pair $i,j\in \{1,2,\ldots, m\}$. Furthermore, if the $d$-disk system $M$ has the nonempty intersection property $\bigcap_{D_i \in M} D_i \neq \emptyset$, then $M$ is called a \textit{\v{C}ech system}.
\end{Definition}

For each $\lambda \geq 0$, we define a collection of $d$-disks $M_\lambda$ with the same centers as those in the $d$-disk system $M$, but with radii rescaled by $\lambda$. When $\lambda > 0$, $M_\lambda$ is a $d$-disk system again. $M_1$ is equal to $M$, and $M_0$ is the set of the centers of the $d$-disks in $M$.

In the field of topological data analysis, the Rips scale and the \v{C}ech scale are essential parameters for determining the closeness and connectivity between data points. These two scales offer different perspectives on how we measure and comprehend geometric relationships within point-cloud data. To understand their importance in capturing the underlying topological structure, let us look at their definitions. The Vietoris-Rips scale $\nu_M$, of a $d$-disk system $M$ is the smallest $\lambda \in \mathbb{R}$ such that $M_\lambda$ is a Vietoris-Rips system. Similarly, the \v{C}ech scale $\mu_M$, of $M$ is the smallest $\lambda \in \mathbb{R}$ such that $M_\lambda$ is a \v{C}ech system. This is 
    \begin{align*}
        \nu_M&=\inf \{\lambda\in \R \mid M_{\lambda} \mbox{ is a Vietoris-Rips system}\}\\
        \mu_M&=\inf \{\lambda\in \R \mid M_{\lambda} \mbox{ is a \v{C}ech system}\}
    \end{align*}

Next, we present some easily observable properties for both scales. It can be easily seen that $M$ is a Vietoris-Rips system if and only if $\nu_M \leq 1$ (in particular $\nu_{M_{\nu_M}} = 1$); similarly, $M$ is a \v{C}ech system if and only if $\mu_M \leq 1$.

Note that for a given $d$-disk system $M=\{D_1,D_2,\ldots , D_m \}$ the Vietoris-Rips scale is 
$$\nu_M = \max_{i<j}\{\Vert c_i - c_j \Vert/(r_i + r_j)\}$$
where $c_i$ and $r_i$ are the center and radii of $D_i$. An additional observation is that, in cases where the disk system consists of either one or two disks, the Vietoris-Rips scale coincides with the \v{C}ech scale.  It is evident that every \v{C}ech system is also a Vietoris-Rips system; however, the reverse assertion, in general, is not true.

Conversely, if the $d$-disk system contains at least three disks, determining the \v{C}ech scale becomes  more complex.
 In the context of \v{C}ech scale, the following remark is important and  play a key role in implementation (see \cite{Espinoza2019} for details).

\begin{Remark} \label{Remark:unique-point}
\normalfont
If $\mu_M$ is the \v{C}ech scale for $M$, then the $\mu_M$-rescaled system $M_{\mu_M}$, has only one point in the intersection $\bigcap_{D_i \in M} D_i(c_i; \mu_M r_i)$. 
\end{Remark}

As we have mentioned, a \v{C}ech system is also a Vietoris-Rips system, but the converse is not true. What we can affirm is that if a system is a Vietoris-Rips system, then the system rescaled by the factor $\sqrt{2d/(d+1)}$ is also a \v{C}ech system. This is established by the following lemma, the proof of which can be found in \cite{Espinoza2019}.

\begin{Lemma} \label{Lemma:VR-generalized}
Let $M = \{D_i(c_i; r_i)\}$ be a $d$-disk system in euclidean space $\mathbb{R}^d$. If $D_i(c_i; r_i) \cap D_j(c_j; r_j) \neq \emptyset$ for every pair of disks in $M$, then 
\[ \bigcap_{D_i \in M} D_i(c_i; \sqrt{2d/(d+1)}\, r_i) \neq \emptyset. \]
\end{Lemma}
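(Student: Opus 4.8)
The plan is to recast the statement as a weighted smallest-enclosing-ball problem and then extract a weighted form of Jung's inequality from the first-order optimality conditions. Write $\lambda_0 := \sqrt{2d/(d+1)}$ and define $F(x) = \max_{i} \|x-c_i\|^2/r_i^2$. This function is continuous and coercive, so it attains its minimum at some $x^*$; set $t^* = F(x^*)$. Since $x \in D_i(c_i;\lambda r_i)$ is equivalent to $\|x-c_i\|^2 \le \lambda^2 r_i^2$, the desired conclusion $\bigcap_i D_i(c_i;\lambda_0 r_i)\neq\emptyset$ is precisely the assertion $t^* \le \lambda_0^2 = 2d/(d+1)$, witnessed by $x^*$. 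Because the scaled disks are convex, Helly's theorem lets me assume $m \le d+1$: it suffices to bound $t^*$ for every subsystem of at most $d+1$ disks, each inheriting the pairwise hypothesis. In particular the active index set defined below then has at most $d+1$ elements (alternatively one keeps the full system and invokes Carath\'eodory on the gradients).

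Next I would record the optimality condition. As a pointwise maximum of smooth convex functions, $F$ is convex, and at its minimizer the zero vector lies in the convex hull of the gradients of the active constraints: there are coefficients $\alpha_i \ge 0$ with $\sum_{i\in I}\alpha_i = 1$, supported on the active set $I = \{i : \|x^*-c_i\|^2 = t^* r_i^2\}$, such that $\sum_{i\in I}\alpha_i (x^*-c_i)/r_i^2 = 0$. If $t^* \le 1$ the lemma is immediate, since $2d/(d+1)\ge 1$, so I may assume $t^* > 1$. The crucial change of variables is $v_i := (x^*-c_i)/r_i$, for which $\|v_i\|^2 = t^*$ for every $i\in I$, so all the $v_i$ have the \emph{same} length; moreover, putting $\gamma_i := \alpha_i/r_i \ge 0$, the optimality condition reads $\sum_{i\in I}\gamma_i v_i = 0$.

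The heart of the argument is to convert the Vietoris-Rips hypothesis into a uniform lower bound on the inner products $v_i\cdot v_j$. Since $c_i - c_j = r_j v_j - r_i v_i$, the hypothesis $\|c_i-c_j\| \le r_i + r_j$ expands to $t^*(r_i^2+r_j^2) - 2 r_i r_j\,(v_i\cdot v_j) \le (r_i+r_j)^2$, that is, $(t^*-1)(r_i^2+r_j^2) \le 2 r_i r_j(1+v_i\cdot v_j)$. Using $r_i^2+r_j^2 \ge 2r_ir_j$ together with $t^*>1$, this yields $v_i\cdot v_j \ge t^*-2$ for all $i\neq j$ in $I$. Expanding the optimality identity then gives
\[
0 = \Big\|\sum_{i\in I}\gamma_i v_i\Big\|^2 = t^*\sum_{i\in I}\gamma_i^2 + \sum_{i\neq j}\gamma_i\gamma_j\,(v_i\cdot v_j) \ge 2\sum_{i\in I}\gamma_i^2 + (t^*-2)\Big(\sum_{i\in I}\gamma_i\Big)^2,
\]
where I used $\gamma_i\gamma_j \ge 0$ and the bound just obtained. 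Rearranging gives $t^* \le 2 - 2\sum_{i}\gamma_i^2/(\sum_{i}\gamma_i)^2$, and by Cauchy-Schwarz $\sum_{i}\gamma_i^2 \ge (\sum_{i}\gamma_i)^2/|I|$ with $|I|\le d+1$, so $t^* \le 2 - 2/|I| \le 2 - 2/(d+1) = 2d/(d+1)$, as required.

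The step I expect to be the main obstacle is justifying cleanly the optimality condition in the form $\sum_{i\in I}\gamma_i v_i = 0$ while simultaneously controlling the size of the active set $I$; this is where convexity of $F$, the subdifferential calculus for a pointwise maximum, and the reduction to $d+1$ disks (via Helly, or Carath\'eodory applied to the gradients) must be combined with care. The substitution $v_i=(x^*-c_i)/r_i$, which at once equalizes the norms $\|v_i\|$ and turns the weighted pairwise hypothesis into the scale-free inequality $v_i\cdot v_j \ge t^*-2$, is the conceptual crux that makes the closing Cauchy-Schwarz estimate possible. As a sanity check, when $r_i\equiv r$ the argument specializes to Jung's theorem for the set of centers, and the bound $2d/(d+1)$ is attained by the vertices of a regular $d$-simplex with equal radii, confirming that the constant is optimal.
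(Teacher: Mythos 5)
Your argument is correct, and it is worth noting that the paper itself gives no proof of this lemma: it simply defers to \cite{Espinoza2019}, whose corresponding statement and proof are formulated for disks in the plane, while the lemma here is asserted for general $d$. Your proposal therefore supplies something the paper only cites: a self-contained, dimension-free proof. The route is a weighted version of the standard convex-optimization proof of Jung's theorem: minimize $F(x)=\max_i\|x-c_i\|^2/r_i^2$, use the subdifferential of a finite max to place $0$ in the convex hull of the active gradients, normalize via $v_i=(x^*-c_i)/r_i$ so that all active $v_i$ have squared norm $t^*$, convert the pairwise hypothesis $\|c_i-c_j\|\le r_i+r_j$ into $v_i\cdot v_j\ge t^*-2$ (using $r_i^2+r_j^2\ge 2r_ir_j$ and $t^*>1$), and close with $\|\sum\gamma_iv_i\|^2=0$ plus Cauchy--Schwarz and $|I|\le d+1$ (via Helly on the convex rescaled disks, or Carath\'eodory on the gradients --- either suffices, and the degenerate case $|I|=1$ forces $t^*=0$, so it causes no trouble). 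Every step checks out, the constant $2d/(d+1)$ emerges exactly, and your simplex example confirms sharpness. What your approach buys over the cited planar argument is generality and a transparent identification of where each hypothesis enters; what it costs is reliance on subdifferential calculus for max-functions, which one could replace by an elementary compactness-plus-perturbation argument if a fully elementary write-up were desired.
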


One of the implications of the previous result is that, for any given disk system $M$, we can bound the \v{C}ech scale using the Vietoris-Rips scale $\nu_M$. This is stated by the following corollary.

\begin{Corollary}\label{Corollary:Cech-system}
If $M$ is an arbitrary $d$-disk system and $\nu_M$ is its Vietoris-Rips scale, then its \v{C}ech scale satisfies $\nu_M\leq \mu_M \leq\sqrt{2d/(d+1)}\, \nu_M$. Therefore, for every $d$-disk system $M$, the rescaled disk system $M_{\sqrt{2d/(d+1)}\, \nu_M}$ is always a \v{C}ech system.
In particular, if $\sqrt{2d/(d+1)}\, \nu_M \leq 1$ then $M_{\nu_M}$ is a \v{C}ech system. 
\end{Corollary}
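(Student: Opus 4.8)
The plan is to derive everything in Corollary \ref{Corollary:Cech-system} directly from Lemma \ref{Lemma:VR-generalized} together with the elementary observations about the two scales recorded just before the corollary. The key facts I would lean on are: $(i)$ a system is Vietoris-Rips iff $\nu_M \le 1$ and \v{C}ech iff $\mu_M \le 1$; $(ii)$ rescaling composes, i.e. $(M_\lambda)_\mu = M_{\lambda\mu}$, and scaling the radii by a factor only relaxes the intersection conditions as the factor grows, so both ``is a Vietoris-Rips system'' and ``is a \v{C}ech system'' are upward-closed properties in $\lambda$; and $(iii)$ the explicit formula $\nu_M = \max_{i<j}\{\|c_i-c_j\|/(r_i+r_j)\}$, which shows $\nu_{M_\lambda} = \nu_M/\lambda$, and hence $\nu_{M_{\nu_M}} = 1$.

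First I would establish the lower bound $\nu_M \le \mu_M$. Since every \v{C}ech system is a Vietoris-Rips system (pairwise intersection is implied by total intersection), the set of $\lambda$ for which $M_\lambda$ is \v{C}ech is contained in the set for which it is Vietoris-Rips; taking infima reverses no inequalities and gives $\nu_M \le \mu_M$ immediately from the definitions as infima.

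Next I would establish the upper bound $\mu_M \le \sqrt{2d/(d+1)}\,\nu_M$. The idea is to apply Lemma \ref{Lemma:VR-generalized} to the rescaled system $M_{\nu_M}$, which is Vietoris-Rips because $\nu_{M_{\nu_M}} = 1 \le 1$, so its disks intersect pairwise. The lemma then asserts that inflating the radii of $M_{\nu_M}$ by the factor $\sqrt{2d/(d+1)}$ produces a system with nonempty total intersection, i.e. $M_{\sqrt{2d/(d+1)}\,\nu_M}$ is a \v{C}ech system. By the characterization of $\mu_M$ as the infimum of scales producing a \v{C}ech system, this yields $\mu_M \le \sqrt{2d/(d+1)}\,\nu_M$, and simultaneously proves the displayed claim that $M_{\sqrt{2d/(d+1)}\,\nu_M}$ is always a \v{C}ech system. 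Finally, the ``in particular'' clause follows by specializing: if $\sqrt{2d/(d+1)}\,\nu_M \le 1$, then since $\nu_M \le \sqrt{2d/(d+1)}\,\nu_M$ (as $2d/(d+1)\ge 1$ for $d\ge 1$) we may apply the previous paragraph to conclude $M_{\nu_M}$ is \v{C}ech, or more directly observe $\mu_M \le \sqrt{2d/(d+1)}\,\nu_M \le 1$.

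The main obstacle is not any hard estimate — Lemma \ref{Lemma:VR-generalized} does the geometric heavy lifting — but rather the bookkeeping around the infimum definitions and the upward-closedness of the two properties. I would want to make precise that if $M_\lambda$ is \v{C}ech (resp. Vietoris-Rips) then so is $M_{\lambda'}$ for every $\lambda' \ge \lambda$, so that the infimum is genuinely a threshold and the implication ``some specific scale works'' translates into ``the infimum is at most that scale.'' This monotonicity, together with $\nu_{M_{\nu_M}}=1$, is exactly what licenses applying the lemma at the scale $\nu_M$ and reading off the bound on $\mu_M$.
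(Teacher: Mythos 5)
Your derivation of the two inequalities is correct and is exactly the route the paper intends (the corollary is stated there without a separate proof, as an immediate consequence of Lemma \ref{Lemma:VR-generalized}): the lower bound $\nu_M\le\mu_M$ follows because the set of scales $\lambda$ for which $M_\lambda$ is a \v{C}ech system is contained in the set for which it is Vietoris--Rips, and applying Lemma \ref{Lemma:VR-generalized} to the pairwise-intersecting system $M_{\nu_M}$ shows $M_{\sqrt{2d/(d+1)}\,\nu_M}$ has nonempty total intersection, whence $\mu_M\le\sqrt{2d/(d+1)}\,\nu_M$. The monotonicity bookkeeping you describe is also sound.

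The gap is in your treatment of the final clause. Upward-closedness lets you pass from ``$M_\lambda$ is \v{C}ech'' to ``$M_{\lambda'}$ is \v{C}ech'' only for $\lambda'\ge\lambda$; since $\nu_M\le\sqrt{2d/(d+1)}\,\nu_M$, knowing that $M_{\sqrt{2d/(d+1)}\,\nu_M}$ is \v{C}ech says nothing about the \emph{smaller} scale $\nu_M$, so ``apply the previous paragraph'' does not yield that $M_{\nu_M}$ is \v{C}ech. Your alternative observation $\mu_M\le\sqrt{2d/(d+1)}\,\nu_M\le 1$ proves that $M=M_1$ is a \v{C}ech system, not that $M_{\nu_M}$ is. In fact $M_{\nu_M}$ is \v{C}ech if and only if $\mu_M\le\nu_M$, i.e.\ if and only if $\mu_M=\nu_M$, which fails in general: for three unit disks centered at the vertices of an equilateral triangle of side $s\le\sqrt{3}$ in $\mathbb{R}^2$ one has $\nu_M=s/2$ and $\mu_M=s/\sqrt{3}>s/2$, while $\sqrt{4/3}\,\nu_M\le 1$. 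So the ``in particular'' clause as printed cannot be derived from the rest of the statement (it appears to be a slip for ``$M$ is a \v{C}ech system''); the honest conclusion from the hypothesis $\sqrt{2d/(d+1)}\,\nu_M\le 1$ is $\mu_M\le 1$, and you should either prove that weaker claim or flag the discrepancy rather than assert the printed one.
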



\subsection{Algorithm for determining \v{C}ech system.}
\label{Subsection:Cech.system}

In the previous section, we have determined the $e_q$-poles for the intersection of any number of disks in $\R^d$. If any of these poles is in all disks of the disk system, it indicates that the system conforms to the criteria of a \v{C}ech system. It is important to recognize that this result streamlines our calculation process, focusing on specific points to establish whether the system exhibits a non-empty intersection.

Given a system of $m$ disks in $\mathbb{R}^d$ where $m>d$, it is enough to verify if every subsystem of $d+1$ disks qualifies as a \v{C}ech system to conclude that the entire system of disks has a non-empty intersection. This assertion is supported by the Helly's Theorem.

Now, we introduce an algorithm that determines whether a disk system qualifies as a \v{C}ech system. In simpler terms, if the disk system exhibits a non-empty intersection, the algorithm outputs "TRUE"; otherwise, it outputs "FALSE". The algorithm operates by seeking poles within the intersections of the disk boundaries, which, as we have observed, correspond to $i$-spheres. It initiates the search for poles within individual disks and then progresses to the pairwise intersections of the disk boundaries ($(d-2)-sphere$), continuing the process iteratively. If a pole is found within the remaining disks, the system is classified as a \v{C}ech system.

\begin{algorithm}
\caption{\texttt{Cech.system}}
\label{Algorithm:intersection_property}
    \SetKwInOut{Input}{Input}
    \SetKwInOut{Output}{Output}

    \Input{A $d$-disk system $M=\{D_j\}_{j=1}^m$}
    \Output{A logical \texttt{TRUE}/\texttt{FALSE} to indicate if $M$ is a \v{C}ech system}
    Initialize: $\texttt{Is\_Cech\_System} \leftarrow \texttt{FALSE}$\\
    \For{$k \leftarrow 1${\normalfont\textbf{ to }}$m$}{
        Let $\mathcal{S}$ be the set of $(d-k+1)$-spheres of $\partial M$\\
        \For{$S${\normalfont\textbf{ in }}$\mathcal{S}$}{
            \For{$q \leftarrow 1${\normalfont\textbf{ to }}$d$}{
                Compute the set $\{s_q^\pm\}$ of $e_q$-poles of $S$\\
                \For{$s \leftarrow \{s_q^\pm\}$}{
                    \If{$s \in \cap_{j=1}^{d} D_j$}{
                        $\texttt{Is\_Cech\_System} \leftarrow \texttt{TRUE}$\\
                        Go to line 16
                    }
                }
            }
        }
     }
     \Return($\texttt{Is\_Cech\_System}$)
\end{algorithm}

\begin{theorem}
Let $M = \{D_1, \ldots, D_m\}$ be a $d$-disk system. Then, $M$ is a \v{C}ech system if and only if \textup{\texttt{Cech.system$(M) = $ TRUE}}.
\end{theorem}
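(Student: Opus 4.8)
The statement is a biconditional; I would establish soundness (the TRUE output is never spurious) and completeness (a \v{C}ech system always triggers TRUE) separately. Soundness is immediate, and completeness is where the argument lives.

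Soundness (\emph{if} direction). Suppose \texttt{Cech.system}$(M)=$ TRUE. The flag is raised only inside the innermost conditional, after the algorithm has produced a concrete point $s$ — an $e_q$-pole of a sphere $S=\partial D_{j_1}\cap\cdots\cap\partial D_{j_k}$ — that passes the membership test, i.e. $s$ belongs to every disk of $M$. Thus $s\in\bigcap_{D_i\in M}D_i$, so the intersection is nonempty and $M$ is a \v{C}ech system by Definition \ref{Definition:RipsCechSystems}. Nothing beyond unwinding the algorithm is needed here.

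Completeness (\emph{only if} direction). Assume $M$ is a \v{C}ech system and put $K:=\bigcap_{j=1}^m D_j\neq\emptyset$. Being a finite intersection of closed balls, $K$ is compact and convex, so for each coordinate $q$ the map $\pi_q$ attains a minimum on $K$ at some point $p$. I would apply Lemma \ref{Lemma:extreme values} to $p$: it yields a nonempty index set $\Lambda(p)=\{j_1,\dots,j_k\}$ and the sphere $S=\bigcap_{r=1}^k\partial D_{j_r}$ for which $p$ is the $e_q$-south pole. The plan is then to match $S$ with the objects scanned by the algorithm: if $S$ is a $\delta$-sphere, the outer loop reaches the index $k'=d-\delta+1$ at which $S\in\mathcal S$, and at that iteration the algorithm computes the $e_q$-poles of $S$ and tests them against $K$. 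Since $p$ is one of those poles and $p\in K$, the test succeeds and TRUE is returned. Running $\pi_q$ over all $q=1,\dots,d$ and using both minima and maxima (south and north poles) only gives more chances to succeed.

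The delicate point — and the step I expect to be the real obstacle — is that Lemma \ref{Lemma:extreme values} only asserts $p$ is \emph{an} $e_q$-pole of $S$, whereas the explicit computation in Subsection \ref{Subsection:more.sphere.system} returns the \emph{unique} poles $s_q^\pm$ and degenerates exactly when $e_q$ is orthogonal to the affine hull of $S$: there $\pi_q$ is constant on $S$, the sphere carries infinitely many $e_q$-poles, and the normalizing quantity $\sum_{i=1}^d\Gamma_i^2+2\Gamma_q+1$ vanishes. I must rule this out for the triple $(p,q,S)$ I use. The favorable case is clean: since $\partial K$ is built from strictly convex spherical patches, whenever $\min_K\pi_q$ is attained at a single point $p$, a degenerate $S$ is impossible, for condition (2) in the proof of Lemma \ref{Lemma:extreme values} would then force a whole neighborhood of $p$ in $S$ to lie in $K$ at the minimal level, contradicting uniqueness; hence the formula returns $p\in K$ and TRUE follows. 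The remaining configurations are those in which, for every coordinate, the extremal value of $\pi_q$ is attained along a positive-dimensional boundary seam lying in a coordinate hyperplane. Disposing of these aligned cases — e.g. by a lexicographic selection of an exposed extreme point and showing it is realized as a strict, formula-computable pole of an appropriate boundary-intersection sphere — is the crux of the completeness proof; once it is settled, the dimension bookkeeping matching $\Lambda(p)$ to the loop index is routine.
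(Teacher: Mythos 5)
Your proof follows the same route as the paper's: soundness by unwinding the algorithm's membership test, and completeness by applying Lemma \ref{Lemma:extreme values} to a minimizer of $\pi_1$ on $\bigcap_{j}D_j$ and then invoking the exhaustive enumeration of poles in Algorithm \ref{Algorithm:intersection_property}. The ``delicate point'' you isolate in your final paragraph --- that the explicit pole formulas of Subsection \ref{Subsection:more.sphere.system} degenerate when $e_q$ is orthogonal to the affine hull of $S$ (the normalizer $\sum_i\Gamma_i^2+2\Gamma_q+1$ vanishes), so the computed set $\{s_q^{\pm}\}$ need not contain the extremal point $p$ --- is not treated in the paper at all: its proof consists of exactly your first two paragraphs and concludes ``by the exhaustive search across all poles.'' So you have reproduced the published argument in full, and the unresolved case you flag is a genuine gap the paper leaves open rather than a step whose resolution you missed.
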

\begin{proof}
If \textup{\texttt{Cech.system$(M) = $ TRUE}}, then the \textup{\texttt{Cech.system}} algorithm (Algorithm \ref{Algorithm:intersection_property}) found a pole contained in the intersection $\cap_{j=1}^m D_j$, therefore $\cap_{j=1}^m D_j \neq \emptyset$ and it follows that $M$ is a \v{C}ech system.

On the other hand, if $\bigcap_{j=1}^m D_j \neq \emptyset$, let $p$ be a point in $\bigcap_{j=1}^m D_j$ satisfying $\pi_1(p) \leq \pi_1(x)$ for every $x$ in $\bigcap_{j=1}^m D_j$. By Lemma \ref{Lemma:extreme values}, it follows that $p$ belongs to an $i$-sphere and must be an $e_1$-south pole. Therefore, by the exhaustive search of Algorithm \ref{Algorithm:intersection_property} across all poles, its output is \textup{\texttt{Cech.system$(M) = $ TRUE}}.
\end{proof}


\subsection{Algorithm to compute the \texorpdfstring{\v{C}}{C}ech scale.} \label{Section:cech.scale}

Finding the minimum parameter for which the rescaled system of disks has a non-empty intersection is significant because it helps identify a critical threshold at which the disks come into contact. This parameter, known as the \v{C}ech scale, provides valuable information about the proximity or overlap of the disks, which can be crucial in various applications such as collision detection in computer graphics, spatial packing problems, and modeling physical phenomena. In this section, we introduce an algorithm to compute an approximation of the \v{C}ech scale for a system of $m$ disks in $\mathbb{R}^d$.
\begin{algorithm}
    \caption{\texttt{Cech.scale}}
    \label{Algorithm:cech.scale}
    \SetKwInOut{Input}{Input}
    \SetKwInOut{Output}{Output}
    
    \Input{A $d$-disk system $M$ in $\mathbb{R}^d$ and precision parameter $\eta > 0$}
    \Output{$\mu_M$, a \v{C}ech scale approximation}
    Compute the Vietoris-Rips scale of $M$: $\nu_M$\\
    \eIf{\textup{\texttt{Cech.system}}$(M_{\nu_M})=$ \textup{TRUE}}{
        $\mu_M \leftarrow \nu_M$\\
        Go to line 16 \\
    }{
        Initialize: $\mu_M^* \leftarrow \nu_M$, $\mu_M \leftarrow \sqrt{2d/(d+1)}\nu_M$\\
        \While{$\mu_M - \mu_M^* > \eta$}{
            Compute: $\lambda \leftarrow \dfrac{\mu_M^* + \mu_M}{2}$\\
            \eIf{\textup{\texttt{Cech.system}}$(M_{\lambda})=$ \textup{TRUE}}{
                Update: $\mu_M \leftarrow \lambda$
                }{
                Update: $\mu_M^* \leftarrow \lambda$}
        }
    }
    \Return($\mu_M$)
\end{algorithm}

    

The given code presents an algorithm to compute an approximation of the \v{C}ech scale of a disk system in Euclidean space using Algorithm \ref{Algorithm:intersection_property} and a precision parameter $\eta>0$. It initializes the scale factor $\lambda$ to the Rips scale $\nu_M$. If this scale satisfies \texttt{Cech.system}$(M_{\nu_M})=$ \textup{TRUE}, it indicates that the \v{C}ech scale has been found. Otherwise, we initiate a cycle in which we compute \texttt{Cech.system} of the system rescaled by a factor $\lambda$. The \v{C}ech scale is known to fall between the Rips scale and the value \(\sqrt{{2d}/{(d+1)}}\nu_M\) (Generalized Vietoris-Rips, Corollary \ref{Corollary:Cech-system}). To approximate the \v{C}ech scale, we employ the bisection method as long as the interval enclosing the \v{C}ech scale has a length greater than $\eta$. Finally, the algorithm returns an approximation of the \v{C}ech scale.

Utilizing the previously described algorithm, we can construct the filtered generalized \v{C}ech complex for a disk system \(M\). Let \(\mathscr{C}(M)\) denote the set of \v{C}ech subsystems, and \(\mathscr{C}_{\lambda}(M)\) the set of \v{C}ech subsystems for the rescaled disk system \(M_{\lambda}\). The \v{C}ech filtration of the \(M\) system forms a maximal chain of \v{C}ech complexes 
$$\mathscr{C}_*(M): \mathscr{C}_0(M) \subsetneq \mathscr{C}_{\lambda_1}(M) \subsetneq \mathscr{C}_{\lambda_2}(M) \subsetneq... \subsetneq \mathscr{C}_{\mu_M}(M),$$
where each \(\lambda_i\) represents the \v{C}ech scale of the system \(M_{\lambda_i}\). Since the \v{C}ech scale of a disk system indicates the factor by which we must rescale the system to make it \v{C}ech, defining a level of the filtration, \(\mathscr{C}_{\lambda}(M)\), simply requires determining the \v{C}ech scale of the system \(M_{\lambda}\).


\section{Minimal Axis-Aligned Bounding Box.} \label{Section:AABB}

In this section, we introduce the concept of the minimal axis-aligned bounding box (AABB) for the intersection of $d$-disks and present methods for its computation. The AABB provides a simplified representation of the disk intersection, making it easier to obtain valuable information about the disks. This information could be useful for computing the \v{C}ech scale of a disk system.

\begin{Definition}
Let $M$ be a disk system in $\mathbb{R}^d$. The minimal axis-aligned bounding box of $M$, denoted as $AABB(M)$, is defined as the smallest axis-aligned bounding box that contains the intersection
   $D=\cap_{D_i\in M} D_i$
, given by
 $$AABB(M):=\bigcap_{D\subset \tilde{B}} \tilde{B} $$
where $\tilde{B}$ ranges over all axis-aligned bounding boxes that contain $D$.
\end{Definition}

Note that the AABB can be expressed as 
\[ AABB(M)= \prod_{k=1}^d [\inf \pi_k(\partial D),\sup \pi_k(\partial D)] \]
where $\pi_k:\mathbb{R}^d \longrightarrow \mathbb{R}$ is the canonical projection onto the $k$-th factor, and $\partial D$ denotes the boundary of $D$. In other words, the AABB of a disk system $M$ is given by the Cartesian product of intervals, where each interval is determined by the minimum and maximum values of the corresponding projection of the disk boundaries.


\subsection{Minimal axis-aligned bounding box for two disks}
Let's consider the situation when the disk system $M$ is composed of two disks $D_1$ and $D_2$ in $\mathbb{R}^d$. If $D_1 \cap D_2 \neq \emptyset$ and $D_1 \neq D_2$, the subset $\partial D_1 \cap \partial D_2$ can take one of three forms: an empty set, a single common point (when the disks are tangent), or a $(d-2)$-dimensional sphere. In the last case, we denote the $(d-2)$-dimensional sphere or the $(d-1)$-sphere $\partial D_1\cap \partial D_2$ by $S_{1,2}$. To calculate $\inf \pi_i(\partial (D_1\cap D_2))$ and $\sup \pi_i(\partial (D_1\cap D_2))$ for each $i\in \{1,2,...,d\}$, we can use:

\begin{align}
\label{two_disks_intersection}
\inf\pi_i(\partial (D_1 \cap D_2)) = \begin{cases}
\pi_i(c_1-r_1 e_i) & \text{if } c_1-r_1 e_i \in D_2, \\
\pi_i(c_2-r_2 e_i) & \text{if } c_2-r_2 e_i \in D_1, \\
\inf(\pi_i(\partial D_1 \cap \partial D_2)) & \text{otherwise},
\end{cases}
\end{align}

\begin{align*}
\sup\pi_i(\partial (D_1 \cap D_2)) = \begin{cases}
\pi_i(c_1+r_1 e_i) & \text{if } c_1+r_1 e_i \in D_2, \\
\pi_i(c_2+r_2 e_i) & \text{if } c_2+r_2 e_i \in D_1, \\
\sup(\pi_i(\partial D_1 \cap \partial D_2)) & \text{otherwise}.
\end{cases}
\end{align*}

Indeed, by Lemma \ref{Lemma:extreme values}, the extremes of the AABB are the projections of certain poles, either from the $(d-1)$-sphere or some \(d\)-sphere. The \(d\)-spheres represent the boundaries of each disk, with poles given by \(c_j \pm r_j e_i\), and the \((d-1)\)-sphere is \(S_{1,2} = \partial D_1 \cap \partial D_2\), whose poles are computed using Lemma \ref{Lemma:poles of two spheres}. It is worth noting that there are no further options for \((d-m+1)\)-spheres in the case of a two-disk system.

To simplify the notation, we will use $B_{i,j}$ to denote the AABB of the intersection of disks $D_i$ and $D_j$. Figure \ref{2disks} illustrates the AABB of the intersection of two disks in the plane. 

\begin{figure}[htb]
    \centering 
    \includegraphics[width=.4\textwidth]{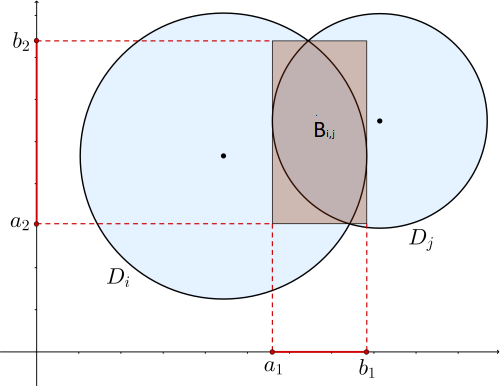}
    \caption{AABB of two disks.} \label{2disks}
\end{figure}

According to (\ref{two_disks_intersection}) and Lemma \ref{two-disk border poles}, we have a method to calculate the axis-aligned boundary box (AABB) for systems of two disks.

Knowing how to compute the AABB of two disks is not sufficient to determine the AABB for a disk system with more than two disks in $\mathbb{R}^d$. In the following examples, we demonstrate that the AABB of a disk system is not simply the intersection of all AABBs of two disks in $\mathbb{R}^3$.

\begin{example}
\label{interseccion de AABB por pares no es la AABB}
Let 
\[ M = \{D_1((4,1,0);\sqrt{2}), D_2((4,-1,0);\sqrt{2}), D_3((0,0,0);3)\} \subset \mathbb{R}^3 \]
be a Vietoris-Rips system in $\mathbb{R}^3$ with the following projection onto the $xy$-plane:

\begin{figure}[htb]
    \centering 
    \includegraphics[width=.4\textwidth]{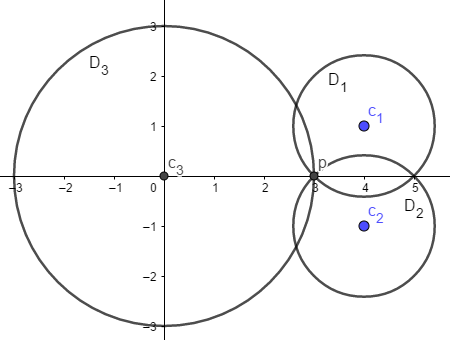}
    \caption{Disk system $M$ projected onto the $xy$-plane.} 
\end{figure}

By computing the boxes $B_{i,j}$ for all $i<j$, we obtain:
\begin{align*}
B_{1,2} &= [3,5] \times [-\sqrt{2}, \sqrt{2}] \times [-1,1] \\
B_{1,3} &= [4-\sqrt{2},3] \times [0, 1.41] \times [-0.72,0.72] \\ 
B_{2,3} &= [4-\sqrt{2},3] \times [-1.4, 0] \times [-0.72,0.72] 
\end{align*}
Therefore, $\cap_{i<j} B_{i,j} = [3,3] \times [0,0] \times [-0.72,0.72]$. However, the disk system intersects at the point $P=(3,0,0)$, which means $AABB(M) = \{P\}$. In other words, $AABB(M)$ is not equal to $\cap_{i<j} B_{i,j}$.
\end{example}

We know  that if $D \neq \emptyset$, then $\cap_{i<j} B_{i,j} \neq \emptyset$. However, the converse is not always true. The following example illustrates this fact.

\begin{example} 
\label{interseccion de AABB por pares no vacia y D vacio}
Let $N=\{D_1,D_2,D_3((0,1,0);\sqrt{10}),D_4((3,0,1);0.9)\} \subset \mathbb{R}^3$ be a Vietoris-Rips system. The projection of disks $D_1, D_2,$ and $D_3$ onto the $xy$-plane is illustrated in Figure \ref{fig:disk_system}.

\begin{figure}[htb]
    \centering 
    \includegraphics[width=.4\textwidth]{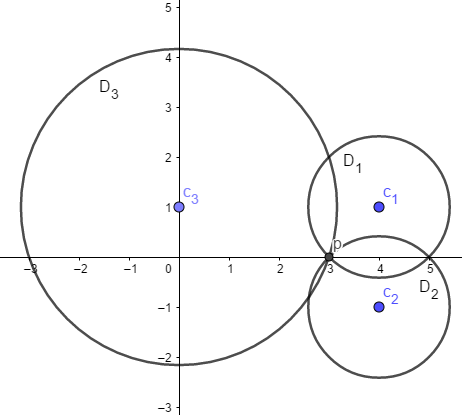}
    \caption{Disk system $N$ projected onto the $xy$-plane.} 
    \label{fig:disk_system}
\end{figure}

We will now compute the intersections $B_{i,j}$ for different pairs of disks:
\begin{align*}
B_{1,2} &= [3,5]\times [-\sqrt{2}, \sqrt{2}] \times [-1,1] \\
B_{1,3} &= [4-\sqrt{2},\sqrt{10}]\times [0, 2] \times [-1,1] \\ 
B_{1,4} &= [2.6,3.9]\times [-0.4, 0.9] \times [0.100007,1.3] \\
B_{2,3} &= [2.6,3]\times [-0.8, 0] \times [-0.44,4.47] \\ 
B_{2,4} &= [2.6,3.9]\times [-0.9, 0.4] \times [0.100007,1.3] \\ 
B_{3,4} &= [2.1,3.11]\times [-0.73, 0.9] \times [0.1,1.7]  
\end{align*}

The intersection of all pairwise intersections, $\cap_{i<j} B_{i,j}$, is given by $[3,3] \times [0,0] \times [0.1,1.7]$. However, the intersection of disks $D_1, D_2,$ and $D_3$ is a single point $P$, which is not contained in $D_4$ (by construction). Therefore, $D$ is an empty set, but $\cap_{i<j} B_{i,j}$ is not.
\end{example}

These examples clearly illustrate that when dealing with the AABB of three disks or more, knowing the AABB for pairs of disks is insufficient. In Example \ref{interseccion de AABB por pares no es la AABB}, we observe that the intersection of $AABB(\{D_i,D_j\})$ is not equal to the AABB of the disk system $M$. Similarly, in Example \ref{interseccion de AABB por pares no vacia y D vacio}, we find that the intersection of three disks is empty, yet the intersection of $AABB(\{D_i,D_j\})$ contains points.

Therefore, the next crucial step is to determine how to calculate the AABB of a disk system consisting of more than two disks in $\mathbb{R}^d$.


\subsection{Minimal axis-aligned bounding box for more than two disks}
Given a system $M$ consisting of $m$ disks in $\mathbb{R}^d$, we can compute the $e_q$-poles for any subcollection of disks. Using these $e_q$-poles, we can determine the axis-aligned bounding box (AABB) of $M$.

If $m \leq d$, we calculate the $e_q$-poles of the $(d-m+1)$-sphere $\cap \partial D_i$, and with these poles, we define the AABB of $M$ by taking $\inf \pi_q \partial D = \pi_q(p)$, where $p$ is the $e_q$-south pole of the $(d-m+1)$-sphere (similarly for $\sup \pi_q \partial D$).

In the case where $m=d+1$, we consider $AABB(M(1)),...,AABB(M(m))$ as a collection of minimal axis-aligned boxes for the disk system $M(i)=M-\{D_i\}$ in $\mathbb{R}^d$. If the intersection of any $d+1$ of these sets is nonempty, then the intersection of the entire collection gives us the minimal axis-aligned box of the disk system $M$. This can be expressed as $AABB(M)=\cap AABB(M(i))$. The next theorem confirms this finding.

\begin{theorem}[Helly's theorem for minimal axis-aligned boxes]
\label{Helly}
Let $M$ be a Rips system with $d+1$ disks in $\mathbb{R}^d$. Then, the minimal axis-aligned bounding box for the intersection set $D=\cap_{j=1}^{d+1} D_j$ satisfies:
\[ AABB(M) = \bigcap_{j=1}^{d+1} AABB (M(j)) \]
where $M(j)=M-\{D_j\}$.
\end{theorem}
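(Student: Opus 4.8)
The plan is to prove the set identity by reducing it to one-dimensional statements about the extreme values of the coordinate projections, and then treating the generic (nonempty) and degenerate (empty) cases separately. Write $D^{(j)} = \bigcap_{i \ne j} D_i$ for the intersection associated with $M(j)$, so that $AABB(M) = \prod_{k=1}^d [\inf \pi_k(D), \sup \pi_k(D)]$ and $AABB(M(j)) = \prod_{k=1}^d [\inf \pi_k(D^{(j)}), \sup \pi_k(D^{(j)})]$, where the extrema are attained on the boundaries since the sets are compact and convex. Because an intersection of axis-aligned boxes is the product of the coordinatewise intersections of their defining intervals, the claimed identity is equivalent to proving, for every coordinate $k$,
\[ \inf \pi_k(D) = \max_{1 \le j \le d+1} \inf \pi_k(D^{(j)}), \qquad \sup \pi_k(D) = \min_{1 \le j \le d+1} \sup \pi_k(D^{(j)}), \]
with the convention that an empty set contributes $+\infty$ and $-\infty$ respectively, so that an empty $D$ or an empty interval collapses the corresponding box. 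Since $D \subseteq D^{(j)}$ for every $j$, the inequalities $\inf \pi_k(D) \ge \max_j \inf \pi_k(D^{(j)})$ and $\sup \pi_k(D) \le \min_j \sup \pi_k(D^{(j)})$ are immediate; this is exactly the always-valid inclusion $AABB(M) \subseteq \bigcap_j AABB(M(j))$.

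For the reverse inequalities I would fix a coordinate $k$ and argue with the half-space $H_t = \{x \in \mathbb{R}^d : \pi_k(x) \ge t\}$ (the case $\pi_k \le t$ being symmetric). Since the suprema are attained, $\sup \pi_k(D^{(j)}) \ge t$ is equivalent to $D^{(j)} \cap H_t \ne \emptyset$, and likewise $\sup \pi_k(D) \ge t$ is equivalent to $D \cap H_t \ne \emptyset$. Now consider the $d+2$ convex sets $D_1, \dots, D_{d+1}, H_t$ in $\mathbb{R}^d$ and apply Helly's theorem: the only $(d+1)$-element subfamilies arise either by discarding $H_t$, whose intersection is $D$, or by discarding a single disk $D_j$, whose intersection is $D^{(j)} \cap H_t$. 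If $t < \min_j \sup \pi_k(D^{(j)})$, then every $D^{(j)} \cap H_t$ is nonempty, so \emph{provided} $D \ne \emptyset$ all $(d+1)$-subfamilies meet and Helly yields $D \cap H_t \ne \emptyset$, i.e. $\sup \pi_k(D) \ge t$. Letting $t$ increase to $\min_j \sup \pi_k(D^{(j)})$ supplies the missing inequality, hence the coordinatewise equality; the infimum case is identical. This settles the reverse inclusion whenever $D \ne \emptyset$ (in particular whenever $M$ is a \v{C}ech system).

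The step I expect to be the main obstacle is the degenerate case in which $M$ is a Rips but not a \v{C}ech system, so that $D = \emptyset$ and $AABB(M) = \emptyset$; here one must instead show that $\bigcap_j AABB(M(j))$ is also empty, equivalently that some coordinate interval $[\max_j \inf \pi_k(D^{(j)}), \min_j \sup \pi_k(D^{(j)})]$ is empty. This cannot be read off from Helly's theorem, because the hypothesis that the subfamily $\{D_1, \dots, D_{d+1}\}$ meets is \emph{exactly} $D \ne \emptyset$; the argument must therefore genuinely use the spherical geometry of the disks rather than convexity alone. I would approach it by choosing witness points $p_j \in D^{(j)}$; since $D = \emptyset$, each $p_j \notin D_j$, so every disk $D_i$ contains the face $\mathrm{conv}\{p_\ell : \ell \ne i\}$ while excluding the opposite vertex $p_i$, and each $D^{(j)}$ contains $p_j$ but none of the other $p_i$. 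The goal is then to show that this ``simplex with excluded vertices'' configuration, combined with the spherical boundaries, forces the projections of two of the sets $D^{(j)}$ to separate along some axis, producing the empty coordinate interval. Verifying this separation rigorously — as opposed to the clean Helly step of the nonempty case — is where the real work lies.
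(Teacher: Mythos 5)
Your treatment of the case $D \neq \emptyset$ is correct and is a genuinely different use of Helly than the paper's: you adjoin the half-space $H_t$ to the $d+1$ disks and apply Helly to $d+2$ convex sets in $\mathbb{R}^d$, whereas the paper argues by contradiction, slices with the hyperplane $\{x_1 = \pi_1(p)-\epsilon\}$ just below the putative extremum $p\in D$, and applies Helly to the $d+1$ sets $D_i\cap P$ inside the $(d-1)$-dimensional hyperplane $P$. Both are clean; yours avoids the $\epsilon$-perturbation and the passage to a lower dimension, and makes the coordinatewise reduction explicit.

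The genuine gap is the degenerate case $D=\emptyset$, which you leave open — and, more importantly, your diagnosis of it is wrong. You claim the conclusion ``cannot be read off from Helly's theorem'' and ``must genuinely use the spherical geometry of the disks.'' It can, and it need not: the statement holds for arbitrary compact convex sets. Suppose $D=\emptyset$ but $z \in \bigcap_{j} AABB(M(j))$; in particular every $D^{(j)}\neq\emptyset$. Fix any coordinate, say the first, set $t=\pi_1(z)$, and let $P$ be the \emph{hyperplane} $\{x_1=t\}$ (not the half-space). Since $\inf\pi_1(D^{(j)})\le t\le \sup\pi_1(D^{(j)})$ and $D^{(j)}$ is convex, hence has connected projection, $P$ meets every $D^{(j)}$. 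Now the $d+1$ convex sets $D_1\cap P,\dots,D_{d+1}\cap P$ live in the $(d-1)$-dimensional affine space $P$, and every $d$ of them share a point (a point of $D^{(j)}\cap P$). Helly's theorem in dimension $d-1$ then forces $\bigcap_{i}(D_i\cap P)=D\cap P\neq\emptyset$, contradicting $D=\emptyset$. This is exactly the hyperplane-slicing trick the paper uses (in its proof of the theorem and again in the lemma that follows it); the obstruction you ran into comes only from insisting on applying Helly in the ambient dimension $d$, where the subfamily $\{D_1,\dots,D_{d+1}\}$ itself is one of the hypotheses. Your proposed alternative route via witness points $p_j$ and a ``simplex with excluded vertices'' separation is not carried out and is not needed; as written, the proposal does not prove the theorem for Rips systems that are not \v{C}ech systems.
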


\begin{proof}
We know that $AABB(M)\subseteq \bigcap_{j=1}^{d+1} AABB (M(j))$. Now, our objective is to establish the reverse inclusion, that is, $\bigcap_{j=1}^{d+1} AABB (M(j)) \subseteq AABB(M)$. In order to derive a contradiction, suppose that the reverse inclusion is not true. By definition, we have:
 \[ AABB(M)=\prod_{i=1}^d \left[\inf\pi_i(\partial D), \sup\pi_i (\partial D) \right] \]
and 
\[ \bigcap_{j=1}^{d+1} AABB (M(j)) = \prod_{i=1}^d \left[\max_k \{\inf\pi_i (\partial \cap_{j\neq k} D_j)\} , \min_k \{\sup\pi_i (\partial \cap_{j\neq k} D_j)  \}\right] \]

Without loss of generality, let's assume that:
\[ \pi_1(p)> \max_k\left\{\inf \pi_1(\cap_{j\neq k}\partial D_j)\right\}_{k=1}^{d+1} \]
where $p\in D$ satisfies $\pi_1(p)=\inf\pi_1(\partial D)$.

Let $q_j$ be the point in $\cap_{k\neq j} D_k$ such that $\pi_1(q_j)=\inf \pi_1(\cap_{k\neq j} D_k)$ for each $j=1,2,...,d+1$. Note that $q_j\notin D_j$ because $q_j$ is not in $D$.  
Now, let $\gamma_j$ be the line segment that connects $q_j$ and $p$.

Choose $\epsilon >0$ small enough such that the hyperplane $P: x_1 = \pi_1(p)-\epsilon$ does not contain any $q_j$, and  $\pi_1(q_j) < \pi_1(p)-\epsilon$ for all $j =1,2,...,d+1$ (such  hyperplane $P$ exists because $\pi_1(p)> \pi_1(q_j)$ for all $j=1,...,d+1$). 
Since $q_j$ is in $\cap_{k\neq j} D_k$, the hyperplane $P$ intersects every disk in $M(j)$ for each $j$, and $\gamma_j \subset \cap_{k\neq j} D_k$ intersects $P$ at a point that is in $\cap_{k\neq j} D_k$ (see Figure \ref{hyperplane}). Furthermore, $D_k \cap P$ is a $(d-1)$-dimensional disk. Therefore, we have a collection $\mathcal{D} = \{D_j\cap P | j=1,2,...,d+1\}$ of $d+1$ disks, each of dimension $d-1$, such that every subset $A$ of $\mathcal{D}$ consisting of $d$ disks has the non-empty intersection property. By Helly's Theorem, the intersection of all $(d-1)$-disks in $\mathcal{D}$ is not empty. Therefore, there exists a point $q\in D$ with $\pi_1(q)<\pi_1(p)$. However, this contradicts the fact that $p \in D$ is such that $\pi_1(p)= \inf \pi_1(\partial D)$.

Therefore, we conclude that $\bigcap_{j=1}^{d+1} AABB (M(j)) = AABB(M)$.
\begin{figure}[htb]
    \centering 
    \includegraphics[width=.4\textwidth]{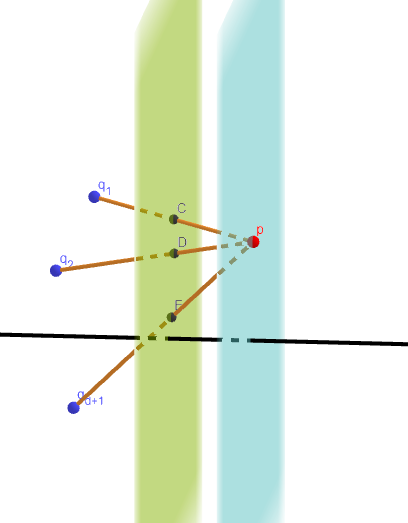}
    \caption{}
    \label{hyperplane}
\end{figure}
\end{proof}

\begin{Lemma}
Let $M$ be a Vietoris-Rips system of $d+1$ disks in $\mathbb{R}^d$. If $D=\cap_{i=1}^{d+1} D_i =\emptyset$ and $AABB(M(j)) \neq \emptyset$ for each $j=1,...,d+1$, then $\cap_{i=1}^{d+1} AABB(M(i))$ consists only of intervals of the form $[a,b]$ with $a > b$ (inverted intervals).
\end{Lemma}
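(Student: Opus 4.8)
The plan is to prove the statement by contradiction: I will assume that one of the $d$ coordinate intervals of $\bigcap_{j=1}^{d+1} AABB(M(j))$ is \emph{not} inverted and derive that $D=\bigcap_{i=1}^{d+1}D_i\neq\emptyset$, contradicting the hypothesis. Write $D^{(k)}=\bigcap_{j\neq k}D_j$ for the intersection of all disks but $D_k$. The assumption $AABB(M(j))\neq\emptyset$ tells me that each $D^{(k)}$ is a nonempty compact convex set, so every projection $\pi_i(D^{(k)})$ is an honest interval whose endpoints are attained on $\partial D^{(k)}$; in particular the quantities appearing in the excerpt's product formula
\[
\bigcap_{j=1}^{d+1} AABB(M(j)) = \prod_{i=1}^d \Bigl[\, \max_k \inf\pi_i(\partial D^{(k)}),\ \min_k \sup\pi_i(\partial D^{(k)}) \,\Bigr]
\]
are genuine coordinate extremes of the convex bodies $D^{(k)}$.

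Next I would fix a coordinate, say the first, and suppose for contradiction that its interval is non-inverted, i.e.\ $a_1:=\max_k\inf\pi_1(\partial D^{(k)}) \le \min_k\sup\pi_1(\partial D^{(k)})=:b_1$. Choosing any $t\in[a_1,b_1]$ and forming the axis-aligned hyperplane $P:\,x_1=t$, I obtain for every $k$ the chain $\inf\pi_1(\partial D^{(k)})\le a_1\le t\le b_1\le \sup\pi_1(\partial D^{(k)})$, so $t$ lies in the projection interval $\pi_1(D^{(k)})$. Since $D^{(k)}$ is convex and compact, this guarantees that the slice $P\cap D^{(k)}$ is nonempty for each $k=1,\dots,d+1$.

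I would then drop into the hyperplane $P\cong\mathbb{R}^{d-1}$ and consider the family $\{P\cap D_j\}_{j=1}^{d+1}$ of convex sets. Because $P\cap D^{(k)}=\bigcap_{j\neq k}(P\cap D_j)$ is nonempty for each $k$, every subfamily obtained by deleting one set has a common point; that is, every $d$ of these $d+1$ convex subsets of $\mathbb{R}^{d-1}$ intersect. Helly's theorem in dimension $d-1$ (whose threshold is exactly $(d-1)+1=d$) then produces a point common to all of them, so $P\cap D=\bigcap_{j=1}^{d+1}(P\cap D_j)\neq\emptyset$, and hence $D\neq\emptyset$, the desired contradiction. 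Since the coordinate index was arbitrary, every coordinate interval must be inverted, which is exactly the claim.

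The step I expect to be the crux is the slicing argument of the second paragraph: one must check that membership of $t$ in the projection interval really yields a nonempty slice $P\cap D^{(k)}$ (this is where compactness and convexity of $D^{(k)}$ enter), and that the endpoints of $\pi_i(D^{(k)})$ coincide with the boundary quantities $\inf/\sup\,\pi_i(\partial D^{(k)})$ used in the formula. Once the slices are known to be nonempty, the argument is essentially the mirror image of Theorem \ref{Helly}: the dimensional reduction to $\mathbb{R}^{d-1}$ and the single application of Helly's theorem are then routine.
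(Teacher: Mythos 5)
Your proof is correct and follows essentially the same route as the paper's: assume a non-inverted coordinate interval, slice with an axis-aligned hyperplane $P$ through that interval, observe that $P$ meets each $(d+1)$-fold-minus-one intersection $D^{(k)}$, and apply Helly's theorem in $P\cong\mathbb{R}^{d-1}$ to the $d+1$ convex slices $P\cap D_j$ to force $D\neq\emptyset$. If anything, your write-up is slightly more careful than the paper's (you take $t$ up to $\min_k\sup\pi_1$ rather than $\sup\pi_1(\partial D^{(k)})$, and you correctly count $d+1$ convex sets of which every $d$ intersect), but the underlying argument is identical.
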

 
\begin{proof}
Suppose that $\cap_{i=1}^{d+1} AABB(M(i))$ contains an interval. Without loss of generality, let's assume it is the interval
\[ [\inf\pi_1(\partial D(k)),  \sup\pi_1(\partial D(l))] \]
for some  $k,l\in \{1,2,...,d+1\}$  where $D(k)=\cap_{i\neq k} D_i$. By definition of $\cap_{i=1}^{d+1} AABB(M(i))$, we have $\inf\pi_1(\partial D(k))\geq \inf\pi_1(\partial D(i))$ for all $i\neq k$ and $\sup\pi_1(\partial D(l))\leq \sup\pi_1(\partial D(j))$ for all $j\neq l$. 
Now, consider a hyperplane $P: x_1=p$ with $\inf\pi_1(\partial D(k)) \leq p \leq \sup\pi_1(\partial D(k))$. The hyperplane $P$ intersects $D(j)$ for every $j=1,...,d+1$ (since $P$ cuts through all the boxes $AABB(M(j))$). We also know that $P\cap D_i$ is a $d-1$-dimensional disk for each $i$.  

Therefore, we have a collection of $d$ disks in a $d-1$ dimensional space, and by Helly's Theorem, this collection must have a non-empty intersection. This implies that $D\neq \emptyset$, which contradicts our assumption that $D$ is empty. Hence, all intervals in $\cap_{i=1}^{d+1} AABB(M(i))$ must be inverted intervals of the form $[a,b]$ with $a > b$.
\end{proof}

As an example, we provide the lower bounds of the AABB for a system of three disks in $\mathbb{R}^d$, this is, we compute $\inf\pi_i(\partial D)$ for each $i=1,...,d$ (analogous $\sup\pi_i \partial D$) with $D=\cap_{j=1}^3 D_j$.
Let $M=\{D_1,D_2,D_3\}$ be a Rips system in $\mathbb{R}^d$, then, for each $i=1,...,d$, the computation of the AABB for the disk system $M$ is given by:
\begin{align*}
\inf \pi_i (\partial D) =
\begin{cases}
 \pi_i(c_k-r_k e_i) & \text{if $c_k-r_k e_i \in D$} \\
\max_{j<k}\{\inf \pi_i(\partial D_j \cap \partial D_k)\} & \text{if (*)} \\
\inf \pi_i (\cap_{j=1}^{3} \partial D_j) & \text{otherwise}
\end{cases}
\end{align*}
where (*) denotes the case where  $q\in D$ with $q\in \partial D_j \cap \partial D_k$ such that 
    $$\pi_i(q)=\max_{j<k}\{\inf \pi_i(\partial D_j \cap \partial D_k)\}.$$

The preceding calculations are a result of Lemma \ref{Lemma:extreme values}. It is known that the extremes of the AABB lie in the projections of specific poles, either from a \(d\)-sphere, \((d-1)\)-sphere, or the \((d-2)\)-sphere.





Given a disk system $M$ in $\mathbb{R}^d$, we can compute the minimal axis-aligned bounding box (AABB) for the intersection of all disks in $M$. If the AABB is a point, then it represents the intersection of the disks. This property allows us to identify when the AABB is a point. If the AABB of $M$ is not a point, we can rescale $M$ by a scale factor $\lambda$ such that the AABB of $M_{\lambda}$ becomes a point. The value of $\lambda$ is referred to as the \v{C}ech scale of the system $M$.

\subsection{Minimal axis-aligned bounding box Algorithm.} \label{Section:algorithm.AABB}

Now, we present an algorithm to calculate the minimal axis-aligned bounding box (AABB) of a system of $m$ disks in $\mathbb{R}^d$.
As previously explained, the AABB's extremes are defined by projecting the poles of specific \(i\)-spheres. Thus, in computing the AABB, we will determine the poles for each \(i\)-sphere within the disk system.

\begin{algorithm}
\caption{\texttt{AABB.minimal}}
    \SetKwInOut{Input}{Input}
    \SetKwInOut{Output}{Output}

    \Input{A $d$-disk system $M=\{D_j\}_{j=1}^m$}
    \Output{The minimal AABB for the system $M$}
    Initialize: $P = \emptyset$\\
    \For{$k \leftarrow 1${\normalfont\textbf{ to }}$m$}{
        Let $\mathcal{S}$ be the set of $(d-k+1)$-spheres of $\partial M$\\
        \For{$S${\normalfont\textbf{ in }}$\mathcal{S}$}{
            \For{$q \leftarrow 1${\normalfont\textbf{ to }}$d$}{
                Compute the set $\{s_q^\pm\}$ of $e_q$-poles of $S$\\
                \For{$s \leftarrow \{s_q^\pm\}$}{
                    \If{$s \in \cap_{j=1}^{d} D_j$}{
                        Add: $P \leftarrow s$ \\
                    }
                }
            }
        }
     }
     \eIf{$P \neq \emptyset$}{
        \For{$q \leftarrow 1${\normalfont\textbf{ to }}$d$}{
            $a_q = min_P\{\pi_q(s_q^-)\}$ \\
            $b_q = max_P\{\pi_q(s_q^+)\}$ \\
        }
        \Return($\Pi_{q = 1}^d [a_q, b_q]$)
    }{
        \Return(The disk system does not intersect)
    }
\end{algorithm}

The algorithm starts by initializing the set of poles, \(P\), as an empty set. In each iteration of the first loop, we determine the spheres formed by the intersection of the boundaries of the disk subcollections in the system \(M\) (for this, we require the center and radius, which are computed in Subsection \ref{Subsection:more.sphere.system}). Next, we identify the poles of each sphere and if any of them are present in all the disks of \(M\), we add them to the set \(P\). If \(P\) is not empty, we proceed to calculate the extremes for each dimension of the AABB using the set of north poles for the upper bounds and the set of south poles for the lower bounds. In this case, the output is the product of the intervals defined by the computed extremes. If \(P\) is empty, it indicates that there is no intersection in the disk system.

\section{Acknowledgements}
C.G.E.P acknowledges CONACYT for the financial support provided through a National Fellowship (CVU-638165).

\bibliographystyle{siam}

\end{document}